\def\myQED{\mbox{\rule[0pt]{1.5ex}{1.5ex}}}
\newcommand{\no}{\nonumber}
\newtheorem{thm}{Theorem}[section]
\newtheorem{lem}[thm]{Lemma}
\newtheorem{cor}[thm]{Corollary}
\newtheorem{define}[thm]{Definition}
\newtheorem{rmk}[thm]{Remark}
\newtheorem{exmpl}[thm]{Example}
\def\ps@headings{%
\def\@oddhead{\mbox{}\scriptsize\rightmark \hfil \thepage}%
\def\@evenhead{\scriptsize\thepage \hfil \leftmark\mbox{}}%
\def\@oddfoot{}%
\def\@evenfoot{}}
\begin{document}

\title{Multi-Chart Detection Procedure for Bayesian Quickest Change-Point Detection with Unknown Post-Change Parameters}

\author{Jun Geng, Erhan Bayraktar, Lifeng Lai\thanks{The work of J. Geng is supported by the National Natural Science Foundation of China under grant 61601144 and by the Fundamental Research Funds for the Central Universities under grant AUGA5710013915.
The work of E. Bayraktar is supported in part by the NSF under grant number DMS-1613170 and in part by the Susan M. Smith Professorship. The work of L. Lai is supported by the National Science Foundation under grants CNS-1660128 and ECCS-1711468. This paper was presented in part at IEEE International Conference on Acoustics, Speech, and Signal Processing (ICASSP), Shanghai, China, Mar. 2016~\cite{Jun:ICASSP:16_2}.

J. Geng is with the School of Electronics and Information Engineering, Harbin Institute of Technology, Harbin, 150001, China (Email: jgeng@hit.edu.cn). E. Bayraktar is with the Department of Mathematics, University of Michigan, Ann Arbor, MI 48109, USA (Email:erhan@umich.edu). L. Lai is with the Department of Electrical and Computer Engineering, University of California, Davis, CA, 95616, USA (Email: lflai@ucdavis.edu).}}
\maketitle 


\maketitle
\begin{abstract}
In this paper, the problem of quickly detecting an abrupt change on a stochastic process under Bayesian framework is considered. Different from the classic Bayesian quickest change-point detection problem, this paper considers the case where there is uncertainty about the post-change distribution. Specifically, the observer only knows that the post-change distribution belongs to a parametric distribution family but he does not know the true value of the post-change parameter. In this scenario, we propose two multi-chart detection procedures, termed as M-SR procedure and modified M-SR procedure respectively, and show that these two procedures are asymptotically optimal when the post-change parameter belongs to a finite set and are asymptotically $\epsilon-$optimal when the post-change parameter belongs to a compact set with finite measure. Both algorithms can be calculated efficiently as their detection statistics can be updated recursively. We then extend the study to consider the multi-source monitoring problem with unknown post-change parameters. When those monitored sources are mutually independent, we propose a window-based modified M-SR detection procedure and show that the proposed detection method is first-order asymptotically optimal when post-change parameters belong to finite sets. We show that both computation and space complexities of the proposed algorithm increase only linearly with respect to the number of sources.
\end{abstract}


\section{Introduction} \label{sec:intro}
To quickly detect the abrupt or the abnormal change in the observing sequence is of interest in wide range of practical applications. 
For example, in the cognitive radio system, the secondary user wants to quickly identify the time instant when the primary user accesses or releases the channel to maximize its throughput \cite{Kim:TSP:10, Sayed:TSP:10, Lai:GLOBE:08}. As another example, in seismic monitoring, it is crucial to quickly detect the abnormal signal caused by the earth crust movement. In such applications, to minimize the detection delay, which is the difference between the time when the abnormal change occurs and the time when the change is declared, is of interest. Quickest change-point detection (QCD) is a suitable mathematical framework for such applications. In particular, QCD aims to design online algorithms that can identify the abrupt change in the probabilistic distribution of a stochastic process as quickly and accurately as possible. 
QCD has two main classes of problem formulations: Bayesian formulation \cite{Shiryaev:Soviet:61, Shiryaev:TPIA:63} and non-Baysian formulation \cite{Lorden:AmS:71,Pollak:AnS:85}. For the classic Bayesian formulation, one sequentially observes a stochastic process $\{X_{k}\}_{k=1}^{\infty}$ with a random change-point $t$. Before the change-point $t$, the sequence $X_{1}, \ldots, X_{t-1}$ are independent and identically distributed (i.i.d.) with probability density function (pdf) $f_{0}$, and after $t$, the sequence $X_{t}, X_{t+1}, \ldots$ are i.i.d. with pdf $f_{1}$. In the Bayesian formulation, the change-point $t$ is typically modeled as a geometrically distributed random variable. The goal is to find an optimal stopping time $\tau$, at which we declare the change has happened, that minimizes the average detection delay $\mathbb{E}[(\tau - t)^{+}]$ subjected to a false alarm constraint. The Shiryaev-Robert (SR) procedure is known to be the optimal detection procedure for Bayesian QCD \cite{Shiryaev:TPIA:63, Tartakovsky:TPIA:04}. In the non-Bayesian formulation, the change-point $t$ is assumed to be a fixed but unknown constant.

In the classic Bayesian QCD, it assumes that both the pre-change distribution $f_{0}$ and the post-change distribution $f_{1}$ are perfectly known by the observer. In most of practical applications, it is reasonable to assume that the pre-change distribution is known, as we typically know the behavior of the system in the normal state. However, one may not know the post-change distribution perfectly, as this represents the system in the abnormal states. Motivated by this fact, in this paper we extend the classic Bayesian QCD problem to the case with incomplete post-change information. In particular, we focus on the case that the post-change distribution belongs to a parametric distribution family, but the true post-change parameter, which belongs to a known compact set $\Lambda$, is unknown to the observer. The goal of the observer is to design an effective online algorithm to quickly detect the change-point in the observation sequence for \emph{all} possible post-change parameter in $\Lambda$. In this scenario, we propose two low complexity multi-chart procedures for the purpose of quickest detection. In the first proposed detection algorithm, the observer divides $\Lambda$ into $I$ small disjoint subsets. By selecting one candidate point in each small subset, the observer constructs a finite set $\Lambda_{D} \subseteq \Lambda$. Then, for each elements within $\Lambda_{D}$, the observer runs a SR detection procedure; the observer declares that the change has occurred when any one of these procedures stops. We note that the observer runs multiple SR procedures simultaneously in each time slot, we term this algorithm as M-SR procedure. The second proposed algorithm is similar to the first one except that it replaces the summation operator within the SR statistic by the maximum operator; hence we term the second proposed multi-chart procedure as the modified M-SR procedure. For both proposed algorithms, we show that they are asymptotically optimal for the post-change parameters within $\Lambda_{D}$ and asymptotically $\epsilon-$optimal for all parameters within $\Lambda$ if $\Lambda_{D}$ is properly selected. The definition of asymptotic $\epsilon-$optimality will be given explicitly in the sequel. Loosely speaking, it indicates that the performance loss of the proposed algorithm is no larger than a small constant $\epsilon$ as the false alarm probability vanishes. We further point out that both proposed algorithms can be calculated recursively hence they are computationally efficient.

We then further extend the study to the multi-source monitoring problem under Bayesian QCD framework. In particular, the observer monitors $L$ mutually independent sources (except the change happens at the same time) and aims to detect the geometrically distributed change-point $t$ as quickly as possible. We consider the case that the distributions of all $L$ sources change simultaneously at $t$ but the post-change distribution of each source contains one unknown parameter. If we directly apply the M-SR or the modified M-SR procedure to the case of multi-variate unknown parameter, the computational complexity will increase exponentially w.r.t. $L$. In this scenario, we propose a window-based modified M-SR procedure. We analyze the complexity and the performance of the proposed algorithm in detail, and show that the computational complexity of this proposed algorithm increases only linearly w.r.t. $L$ and the proposed algorithm is asymptotically optimal as the  false alarm probability goes to zero.

The problem considered in this paper is related to recent works on the QCD problem that take the unknown post-change parameter into consideration. Due to limited space, we mention a few of them. For Bayesian setups, optimal solutions are likely to be obtained by converting proposed problems into Markovian optimal stopping problems. 
For example, \cite{Bayraktar:AAP:06} solves the Poisson disorder problem with unknown post-change arrival rate; \cite{Dayanik:MOR:08} solves the Bayesian sequential change diagnosis problem, in which the observer aims not only to quickly detect the change but also to accurately identify the post-change parameter from a finite set; \cite{Dayanik:TIT:09} solves a generalized formulation of the sequential change diagnosis problem for a Markov modulated sequence. For non-Bayeisan QCD setups, most of existing works propose modified versions of the CUSUM detection procedure, which is the optimal scheme for the classic non-Bayeisan QCD problem with known parameters \cite{Moustakides:AnS:86}, and discuss their robustness over the post-change uncertainty. For example, \cite{Lorden:AmS:71, Lorden:AnS:73, Lai:JRSS:95} replace the unknown likelihood ratio in the CUSUM statistic by the generalized likelihood ratio (GLR). 
\cite{Pollak:AnS:87} proposes the mixture-based CUSUM algorithm in which the CUSUM statistic is averaged over a prior distribution of the unknown parameter.  \cite{Lai:TIT:98} further shows the asymptotic optimality of the GLR-based CUSUM and the mixture-based CUSUM detection algorithms. \cite{Yang:Stochastics:17} designs a composite stopping time that combines multiple CUSUM procedures to quickly detect the disorder time in the Wiener process with post-change uncertainty. \cite{Banerjee:TIT:15} mentions the multi-chart CUSUM detection strategy. One may refer to a recent book \cite{Tartakovsky:Book:14} and references therein for more detailed results of this topic.

Different from aforementioned literatures, our work is formulated under Bayesian framework and focuses on the performance of low complexity multi-chart detection procedures instead of optimal schemes, which usually have high computational complexities. \cite{Dayanik:AOR:13} has studied the asymptotic detection rules for Bayesian sequential diagnosis problem. Our paper focuses on the QCD problem and shows that the multi-chart version of the well-known SR procedure exhibits robustness over unknown parameters. The problem studied in this paper can be also viewed as a special case of the Markov chain tracking problem \cite{Bayraktar:SPA:09, Bayraktar:SADA:09, Bayraktar:AOR:10} 
with several absorbing states. 
However, those works focus on the optimal solution of generalized formulations while our work focus on the asymptotic optimal algorithms for the QCD problem.

There are also many recent works considering the QCD problem for multi-source monitoring. \cite{Tartakovsky:ICIF:08} is one of the most related works. Specifically, \cite{Tartakovsky:ICIF:08} considers the QCD problem in a distributed multi-sensor system when the post-change parameters are unknown. Authors also proposed to use multi-chart CUSUM to solve the proposed problem. In addition, \cite{Mei:Biometrika:10} proposes the SUM algorithm, which is based on the sum of local CUSUMs, to quickly detect the abrupt change in multiple independent data streams. However, \cite{Mei:Biometrika:10} only focuses on the case with known post-change distributions. Both \cite{Tartakovsky:ICIF:08} and \cite{Mei:Biometrika:10} are formulated under non-Bayesian setting. We consider the multi-source monitoring problem under Bayesian QCD framework and discuss the performance and the implementation complexity of the proposed algorithm in detail.

This paper extends our previous conference publication \cite{Jun:ICASSP:16_2} in several ways. Specifically, \cite{Jun:ICASSP:16_2} only discusses the asymptotic optimality of the M-SR procedure when the post-change parameter belongs to a finite set. In addition to including the contributions made in \cite{Jun:ICASSP:16_2}, this paper also proposes the modified M-SR procedure, and studies the performance of both algorithms under a more general setting. Furthermore, we consider the multi-source monitoring problem and analyze the performance of our newly proposed algorithm.

The remainder of this paper is organized as follows. The mathematic model is given in Section \ref{sec:model}. Section \ref{sec:optimal} presents the proposed multi-chart detection algorithms and analyzes their asymptotic performances. Section \ref{sec:ext} discusses the multi-source monitoring problem and analyzes the performance of the proposed window-based modified M-SR procedure. Technical proofs in this paper are presented in Section \ref{sec:proof}. Numerical examples are given in Section \ref{sec:sim} to illustrate the theoretic results obtained in this work. Finally, Section \ref{sec:con} offers concluding remarks.


\section{Model} \label{sec:model}
\subsection{Problem Formulation}
We consider a random observation sequence $\left\{ X_{k}, k = 1, 2, \ldots \right\}$ whose distribution changes at an unknown time $t$. In particular, observations obtained before the change-point $t$, namely $X_{1}, X_{2}, \ldots, X_{t-1}$, are i.i.d. with pdf $g(x; \theta_{0})$; observations obtained after the change-point $t$, namely $X_{t}, X_{t+1}, \ldots$, are i.i.d. with pdf $f(x; \lambda)$.

In the classic QCD problem, both the pre-change distribution $g(x; \theta_{0})$ and the post-change distribution $f(x; \lambda)$ are perfectly known by the observer. In this paper, we consider the case that the observer only knows partial information of the post-change distribution. Specifically, we assume that the pre-change distribution $g(x; \theta_{0})$ is perfectly known to the observer; hence $\theta_{0}$ is a known parameter. However, the post-change distribution $f(x; \lambda)$ contains an \emph{unknown} parameter $\lambda$. The observer knows that $\lambda$ is taken from a compact set $\Lambda$ but he does not know the true value of $\lambda$. 
For notation convenience, in the rest of paper, $g(x, \theta_{0})$ is also denoted as $g_{\theta_{0}}(x)$, and $f(x; \lambda)$ is denoted as $f_{\lambda}(x)$.

In this paper, we focus on the Bayesian QCD problem. In Bayesian framework, the change-point $t$ is modeled as a geometric random variable with known parameter $\rho$, i.e., for $0 < \rho < 1$,
\begin{eqnarray}
P(t = k) = \rho(1-\rho)^{k-1}, \quad  k = 1, 2, \ldots.  \label{eq:ext_geometry}
\end{eqnarray}
Observations $X_{k}$'s generate the filtration $\{\mathcal{F}_k\}_{k\in\mathbb{N}}$ with
$$\mathcal{F}_k=\sigma(X_1,\cdots,X_k), \quad k=1, 2, \ldots,$$
and $\mathcal{F}_{0}$ contains the sample space $\Omega$.

To facilitate the presentation, we denote $P_{k}^{(\theta_{0}, \lambda_{i})}$ as the conditional probability measure of the observation sequence given $\{t=k; \lambda=\lambda_{i}\}$.
For a measurable event $F$, we define probability measure $P_{\pi}^{(\theta_{0}, \lambda_{i})}$ as
\begin{eqnarray}
&&P_{\pi}^{(\theta_{0}, \lambda_{i})}(F) := \sum_{k=1}^{\infty} P_{k}^{(\theta_{0}, \lambda_{i})}(F)P(t=k). \no
\end{eqnarray}
We use $\mathbb{E}_{k}^{(\theta_{0}, \lambda_{i})}$ and $\mathbb{E}_{\pi}^{(\theta_{0}, \lambda_{i})}$ to denote the expectations with respect to probability measures $P_{k}^{(\theta_{0}, \lambda_{i})}$ and $P_{\pi}^{(\theta_{0}, \lambda_{i})}$, respectively.

The observer aims to detect the change-point as quickly and accurately as possible. Let $\mathcal{T}$ be a set of finite stopping times adapted to $\mathcal{F}_{k}$. A stopping time $\tau \in \mathcal{T}$ indicates the time instant that the observer stops taking observation and declares that the change has occurred. If the observer raises the alarm before the change-point occurs, i.e. $\tau < t$, we call the observer makes a false alarm. On the other hand, if $\tau \geq t$, we define $(\tau - t)$ as the detection delay. Hence, we use the following two performance metrics to evaluate a detection procedure $\tau$:
\begin{eqnarray}
&&\mathrm{ADD}(\tau; \theta_{0}, \lambda) := \mathbb{E}_{\pi}^{(\theta_{0}, \lambda)}[(\tau - t)^{+}], \no\\
&&\mathrm{PFA}(\tau; \theta_{0}, \lambda) := P_{\pi}^{(\theta_{0}, \lambda)}(\tau < t). \no
\end{eqnarray}
In our setup, the observer aims to minimize the average detection delay (ADD) while keeping the probability of false alarm (PFA) under control regardless of the value of $\lambda$. In other words, the observer wants to solve the following optimization problem:
\begin{eqnarray}
&& \inf_{\tau \in \mathcal{T}} \mathrm{ADD}(\tau; \theta_{0}, \lambda) \quad \text{ for all } \lambda \in \Lambda, \no\\
&& \textrm{subject to } \sup_{\lambda \in \Lambda} \mathrm{PFA}(\tau; \theta_{0}, \lambda) \leq \alpha, \label{eq:P1}
\end{eqnarray}
where $\alpha$ is a constant to control the false alarm. We note that \eqref{eq:P1} aims to minimize ADD and to achieve a small probability of false alarm \emph{for all possible post-change parameters simultaneously}. In general, it is difficult to find an optimal solution for the above multi-objective optimization problem. In this paper, we aim to design asymptotically optimal or sub-optimal detection algorithms.

\subsection{Other Related Formulations}
We emphasize that, in our formulation stated in \eqref{eq:P1}, the observer requires no prior distribution of $\lambda \in \Lambda$. In some existing related literatures, such as \cite{Bayraktar:AAP:06, Yang:Stochastics:17, Dayanik:AOR:13}, the prior distribution of the post-change parameter is assumed to be known by the observer. When the prior distribution of $\Lambda$ is available, in our context, it is natural to consider the problem formulation with averaged or weighted performance metrics.

Specifically, let $G(\lambda)$ be the prior distribution (or a subjective normalized weighting function) of $\Lambda$, and
\begin{eqnarray}
\int_{\lambda\in\Lambda} dG(\lambda) = 1. \label{eq:weighting}
\end{eqnarray}
In this case, we can define an averaged PFA as
\begin{eqnarray}
P_{\pi, \omega}^{(\theta_{0})}(\tau < t) 
:= \int_{\lambda \in \Lambda} P_{\pi}^{(\theta_{0}, \lambda)}(\tau < t) dG(\lambda), \label{eq:avg_pfa}
\end{eqnarray}
and define an averaged ADD as
\begin{eqnarray}
\mathbb{E}_{\pi, \omega}^{(\theta_{0})}[(\tau-t)^{+}] 
:= \int_{\lambda \in \Lambda} \mathbb{E}_{\pi}^{(\theta_{0}, \lambda)}[(\tau-t)^{+}] dG(\lambda). \label{eq:avg_delay}
\end{eqnarray}
Correspondingly, the observer may want to solve the following optimization problem
\begin{eqnarray}
&& \inf_{\tau \in \mathcal{T}} \mathbb{E}_{\pi, \omega}^{(\theta_{0})}[(\tau-t)^{+}], \no\\
&& \textrm{subject to } P_{\pi, \omega}^{(\theta_{0})}(\tau < t) \leq \alpha. \label{eq:PP1}
\end{eqnarray}

We note that the (asymptotically) optimal solution of Formulation \eqref{eq:P1} is also (asymptotically) optimal for Formulation \eqref{eq:PP1}. To see this fact, it is worth to notice that
$$ P_{\pi}^{(\theta_{0}, \lambda_{i})}(\tau < t) = P_{\pi}^{(\theta_{0}, \lambda_{j})}(\tau < t) \quad \forall \lambda_{i}, \lambda_{j} \in \Lambda, \lambda_{i}\neq\lambda_{j}.$$
This is because all observations are generated from $g_{\theta_0}(x)$ on the event $\{ \tau < t\}$ (when false alarm happens); hence the false alarm probability is independent of the post-change parameter $\lambda$. As a result, we have
\begin{eqnarray}
P_{\pi, \omega}^{(\theta_{0})}(\tau < t) = P_{\pi}^{(\theta_{0}, \lambda_{i})}(\tau < t) = \sup_{\lambda} P_{\pi}^{(\theta_{0}, \lambda)}(\tau < t). \label{eq:avg_worst}
\end{eqnarray}
Hence, if the worst case false alarm constraint in \eqref{eq:P1} is satisfied by detection procedure $\tau$, the average false alarm constraint in \eqref{eq:PP1} is also satisfied. Furthermore, for the objective function, we note that to minimize $\textrm{ADD}(\tau; \theta_{0}, \lambda)$ for all $\lambda \in \Lambda$ simultaneously in \eqref{eq:P1} is stronger than to minimize $\mathbb{E}_{\pi, \omega}^{(\theta_{0})}[(\tau-t)^{+}]$, the detection delay averaged over $\lambda$, in \eqref{eq:avg_delay}. Hence, Formulation \eqref{eq:P1} is more stringent than Formulation \eqref{eq:PP1}. As a result, the (asymptotically) optimal solution for Formulation \eqref{eq:P1} is also expected to have a good performance for Formulation \eqref{eq:PP1}.

\section{Multi-Chart Procedures and Their Sub-Optimality} \label{sec:optimal}
It is well known that the SR procedure is the optimal detection procedure for classic Bayesian QCD problem. In this section, we propose two new detection algorithms, both of which are modified versions of the SR procedure, and analyze their performances.

\subsection{Multi-chart procedures}
For the classic Bayesian QCD problem with pre-change distribution $f_{0}$ and post-change distribution $f_{1}$, the SR procedure is known as
\begin{eqnarray}
&&R_{n} := \sum_{k=1}^{n} \prod_{q=k}^{n} \frac{1}{1-\rho}\frac{f_{1}(X_{q})}{f_{{0}}(X_{q})}, \no\\
&&\tau_{SR} := \inf\left\{ n \geq 1 :  \log R_{n} >  \log B \right\}, \no
\end{eqnarray}
for a properly chosen threshold $B$. For our formation, as the post-change distribution contains an unknown parameter, it is natural to consider replacing the likelihood ratio in the SR procedure with the generalized likelihood ratio. In our context, the GLR-based SR procedure can be written as
\begin{eqnarray}
&&R_{n}^{*} := \sup_{\lambda \in \Lambda} \sum_{k=1}^{n} \prod_{q=k}^{n} \frac{1}{1-\rho}\frac{f_{\lambda}(X_{q})}{g_{\theta_{0}}(X_{q})}, \no\\
&&\tau_{GLR} := \inf\left\{ n \geq 1 :  \log R_{n}^{*} >  \log B \right\}. \no
\end{eqnarray}
Though the GLR-based SR procedure is natural and attractive, it has several shortcomings. Theoretically, it is a challenging task to find the false alarm probability of $\tau_{GLR}$. Practically, $\tau_{GLR}$ has large computational complexity as the observer needs to solve for the best $\lambda^{*}$ at each time slot $n$. The computational complexity depends on the specific form of $g_{\theta_{0}}$ and $f_{\lambda}$, which could lead to very challenging optimization problems.

In order to resolve these difficulties, we discretize the feasible set $\Lambda$ by selecting finitely many 
candidate points. Specifically, let $\Lambda_{D} = \{ \lambda_{0}, \lambda_{1}, \ldots, \lambda_{I-1}\}$ be a discrete set with $I$ different elements, and $\Lambda_{D} \subseteq \Lambda$. For $i=0, \ldots, I-1$, let
\begin{eqnarray}
R_{n}^{(\theta_{0}, \lambda_{i})} := \sum_{k=1}^{n} \prod_{q=k}^{n} \frac{1}{1-\rho}\frac{f_{\lambda_{i}}(X_{q})}{g_{\theta_{0}}(X_{q})} \label{eq:stat_R}
\end{eqnarray}
be the detection statistic, we propose the following detection procedure:
\begin{eqnarray}
&&\tau_{R}^{(i)} = \inf\left\{ n \geq 1 : \log R_{n}^{(\theta_{0}, \lambda_{i})} > \log B_{i} \right\}, \label{eq:tau_ri}\\
&&\tau_{R} = \min_{i\in\{0, \ldots, I-1\}} \tau_{R}^{(i)}. \label{eq:tau_r}
\end{eqnarray}
There are a few comments about the proposed algorithm. Firstly, the proposed algorithm can be viewed as a GLR-based SR detection procedure over the discrete set $\Lambda_{D}$. As we use $\Lambda_{D}$ to approximate the post-change parameter set, $\tau_{R}$ then can be viewed as an approximation of $\tau_{GLR}$. Secondly, the proposed algorithm is a multi-chart detection procedure. Specifically, in the proposed algorithm, the observer updates $I$ statistics $R_{n}^{(\theta_{0}, \lambda_{i})}$, $i=0, \ldots, I-1$, in a parallel manner at each time slot, and each statistic is compared with its own threshold $B_{i}$. The procedure stops when any one of statistics exceeds its corresponding threshold. As $R_{n}^{(\theta_{0}, \lambda_{i})}$ is the statistic used in the SR procedure, we term our proposed multi-chart procedure as \emph{M-SR procedure}. Thirdly, we comment that $\tau_{R}$ can be calculated efficiently since $R_{n}^{(\theta_{0}, \lambda_{i})}$ can be updated recursively at each time slot. It is easy to see
\begin{eqnarray}
R_{n}^{(\theta_{0}, \lambda_{i})} = \left(1+R_{n-1}^{(\theta_{0}, \lambda_{i})}\right)\frac{1}{1-\rho}\frac{f_{\lambda_{i}}(X_{n})}{g_{\theta_{0}}(X_{n})}. \label{eq:R_recuresive}
\end{eqnarray}
Hence, the computational complexity of the proposed M-SR procedure is on the order of $O(I)$ at each time slot.

Besides the M-SR procedure, we also propose the following \emph{modified M-SR procedure}:
\begin{eqnarray}
&&C_{n}^{(\theta_{0}, \lambda_{i})} := \max_{1\leq k \leq n} \prod_{q=k}^{n} \frac{1}{1-\rho}\frac{f_{\lambda_{i}}(X_{q})}{g_{\theta_{0}}(X_{q})}, \quad i = 0, \ldots, I-1,  \label{eq:stat_C}\\
&&\tau_{C}^{(i)} = \inf\left\{ n \geq 1 : \log C_{n}^{(\theta_{0}, \lambda_{i})} > \log B_{i} \right\}, \label{eq:tau_ci}\\
&&\tau_{C} = \min_{i\in\{0, \ldots, I-1\}} \tau_{C}^{(i)}. \label{eq:tau_c}
\end{eqnarray}
We note that the difference between $\tau_{R}$ and $\tau_{C}$ is that $R_{n}^{(\theta_{0}, \lambda_{i})}$ takes summation over $1$ to $n$ but $C_{n}^{(\theta_{0}, \lambda_{i})}$ takes the maximum. It is worth to notice that $C_{n}^{(\theta_{0}, \lambda_{i})}$ can also be updated recursively as
\begin{eqnarray}
C_{n}^{(\theta_{0}, \lambda_{i})} = \max\left\{C_{n-1}^{(\theta_{0}, \lambda_{i})}, 1 \right\}\frac{1}{1-\rho}\frac{f_{\lambda_{i}}(X_{n})}{g_{\theta_{0}}(X_{n})}.
\end{eqnarray}
Hence, the computational complexity of the proposed modified M-SR procedure is also on the order of $O(I)$ at each time slot.

In the rest of this section, we will analyze the performance of $\tau_{R}$ and $\tau_{C}$ for our proposed Formulation \eqref{eq:P1} under different assumptions on $\Lambda$. Before our further analysis, we first present asymptotic lower bounds of ADDs for the  post-change parameters in $\Lambda$ in the following theorem:
\begin{thm} \label{lem:LB}
(Lower Bounds) 
For all $\lambda_{i} \in \Lambda$, as $\alpha \rightarrow 0$,
\begin{eqnarray}
&& \inf_{\tau \in \mathcal{T} }\left\{ \mathbb{E}_{\pi}^{(\theta_{0}, \lambda_{i})}[(\tau - t)^{+}]: \sup_{\lambda \in \Lambda_{D}} P_{\pi}^{(\theta_{0}, \lambda)}(\tau < t) \leq \alpha \right\} \no\\
&&\hspace{10mm} \geq \frac{|\log \alpha|}{D(f_{\lambda_{i}}, g_{\theta_{0}})+|\log(1-\rho)|}(1+o(1)),  \label{eq:LB_P1}
\end{eqnarray}
where $D(f_{\lambda_{i}}, g_{\theta_{0}})$ is the Kullback-Leibler (KL) divergence between $f(x; \lambda_{i})$ and $g(x; \theta_{0})$.
\end{thm}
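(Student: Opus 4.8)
The plan is to prove the bound for \emph{every} admissible stopping time by a standard change-of-measure argument for Bayesian quickest detection (cf.\ \cite{Tartakovsky:Book:14}), adapted to the parametric post-change family. First I would simplify the constraint: as observed in Section~\ref{sec:model}, on $\{\tau<t\}$ all observations are generated by $g_{\theta_0}$, so $P_\pi^{(\theta_0,\lambda)}(\tau<t)$ does not depend on $\lambda$ and the requirement $\sup_{\lambda\in\Lambda_D}P_\pi^{(\theta_0,\lambda)}(\tau<t)\le\alpha$ is nothing but $P_\pi^{(\theta_0)}(\tau<t)\le\alpha$. Hence it suffices to show that for every finite stopping time $\tau$ with $P_\pi^{(\theta_0)}(\tau<t)\le\alpha$ and every $\lambda_i\in\Lambda$,
\[ \mathbb{E}_\pi^{(\theta_0,\lambda_i)}[(\tau-t)^+]\ \ge\ \frac{|\log\alpha|}{D(f_{\lambda_i},g_{\theta_0})+|\log(1-\rho)|}\,(1-o(1)), \]
with an $o(1)$ term not depending on $\tau$; taking $\inf_\tau$ then yields \eqref{eq:LB_P1}.

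Write $D=D(f_{\lambda_i},g_{\theta_0})$, fix $\epsilon\in(0,1)$, and set $m=m_\alpha=\lceil (1-\epsilon)|\log\alpha|/(D+|\log(1-\rho)|)\rceil$. Since $(\tau-t)^+\ge m\,\mathbf{1}\{\tau\ge t+m\}$, the claim follows once I show $P_\pi^{(\theta_0,\lambda_i)}(\tau<t+m_\alpha)\to0$ as $\alpha\to0$. Splitting $P_\pi^{(\theta_0,\lambda_i)}(\tau<t+m)\le P_\pi^{(\theta_0,\lambda_i)}(\tau<t)+\sum_{k\ge1}\rho(1-\rho)^{k-1}P_k^{(\theta_0,\lambda_i)}(k\le\tau<k+m)$, the first term is $\le\alpha$, and for each summand I would change measure to the no-change measure $P_\infty^{(\theta_0)}$ (under which every $X_q\sim g_{\theta_0}$), for which $dP_k^{(\theta_0,\lambda_i)}/dP_\infty^{(\theta_0)}\big|_{\mathcal{F}_n}=\exp(Z_n^{(k)})$ with $Z_n^{(k)}=\sum_{q=k}^{n}\log\frac{f_{\lambda_i}(X_q)}{g_{\theta_0}(X_q)}$. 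Then for any $c>0$,
\[ P_k^{(\theta_0,\lambda_i)}(k\le\tau<k+m)=\mathbb{E}_\infty^{(\theta_0)}\Big[\sum_{n=k}^{k+m-1}\mathbf{1}\{\tau=n\}e^{Z_n^{(k)}}\Big]\le e^{c}\,P_\infty^{(\theta_0)}(k\le\tau<k+m)+P_k^{(\theta_0,\lambda_i)}\Big(\max_{k\le n<k+m}Z_n^{(k)}>c\Big). \]

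Averaging this against the geometric prior, the $e^c$-term equals $e^c\sum_{k}\rho(1-\rho)^{k-1}P_\infty^{(\theta_0)}(k\le\tau<k+m)$, which after interchanging the two sums and using $\sum_{j=\max(1,n-m+1)}^{n}\rho(1-\rho)^{j-1}\le(1-\rho)^{n-m}$ together with the identity $P_\pi^{(\theta_0)}(\tau<t)=\mathbb{E}_\infty^{(\theta_0)}[(1-\rho)^\tau]$ is at most $(1-\rho)^{-m}e^{c}P_\pi^{(\theta_0)}(\tau<t)\le(1-\rho)^{-m}e^{c}\alpha$; the second term is at most $\sup_{k\ge1}P_k^{(\theta_0,\lambda_i)}(\max_{k\le n<k+m}Z_n^{(k)}>c)$, and since under $P_k^{(\theta_0,\lambda_i)}$ the increments $\log\frac{f_{\lambda_i}(X_q)}{g_{\theta_0}(X_q)}$, $q\ge k$, are i.i.d.\ with mean $D<\infty$ and law independent of $k$, taking $c=m(D+\eta)$ for a fixed small $\eta>0$ makes this supremum vanish as $m\to\infty$ by a maximal form of the strong law of large numbers. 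With $m=m_\alpha$ one has $(1-\rho)^{-m_\alpha}e^{c}\alpha=\exp\big(m_\alpha(D+\eta+|\log(1-\rho)|)-|\log\alpha|\big)$, and since $m_\alpha(D+|\log(1-\rho)|)\le(1-\epsilon)|\log\alpha|+O(1)$, choosing $\eta$ so small that $(1-\epsilon)\eta/(D+|\log(1-\rho)|)<\epsilon$ forces this quantity to $0$. Hence $P_\pi^{(\theta_0,\lambda_i)}(\tau<t+m_\alpha)\to0$ uniformly over admissible $\tau$, so $\mathbb{E}_\pi^{(\theta_0,\lambda_i)}[(\tau-t)^+]\ge m_\alpha(1-o(1))\ge(1-\epsilon)|\log\alpha|/(D+|\log(1-\rho)|)(1-o(1))$; taking the infimum over $\tau$ and then letting $\epsilon\downarrow0$ gives the stated bound.

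The hard part is pinning down the exact constant, and it lives entirely in the balancing of the two vanishing error terms: the $(1-\rho)^{-m}$ blow-up, which arises precisely from the geometric change-point prior, is what injects the additive $|\log(1-\rho)|$ into the denominator, so the window length $m_\alpha$ and the level $c=m(D+\eta)$ must be tuned jointly so that $e^{c}(1-\rho)^{-m}\alpha\to0$ while $c/m$ can still be made arbitrarily close to $D$. The one genuinely technical ingredient is the uniform-in-$k$ maximal SLLN estimate for the random walk $Z^{(k)}$ (which uses $D<\infty$ and a mild integrability of $\log(f_{\lambda_i}/g_{\theta_0})$); everything else is bookkeeping with the geometric weights.
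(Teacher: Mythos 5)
Your argument is correct, but it takes a different (and much longer) route than the paper. The paper's proof is a two-line reduction: since the worst-case constraint $\sup_{\lambda}P_{\pi}^{(\theta_{0},\lambda)}(\tau<t)\leq\alpha$ is at least as restrictive as the single constraint $P_{\pi}^{(\theta_{0},\lambda_{i})}(\tau<t)\leq\alpha$, the infimum in \eqref{eq:LB_P1} is bounded below by the infimum for the classic Bayesian QCD problem with known post-change parameter $\lambda_{i}$, and the paper then simply invokes the known lower bound (Theorem 1 of \cite{Tartakovsky:TPIA:04}). You perform essentially the same first step (observing that the false alarm probability does not depend on $\lambda$, so the sup constraint collapses to a single constraint), but instead of citing the classical result you re-derive it from scratch via a change of measure to the no-change law $P_{\infty}^{(\theta_{0})}$, the identity $P_{\pi}^{(\theta_{0})}(\tau<t)=\mathbb{E}_{\infty}[(1-\rho)^{\tau}]$, and a maximal-SLLN bound on the random walk $Z_{n}^{(k)}$, balancing the window $m_{\alpha}\approx(1-\epsilon)|\log\alpha|/(D+|\log(1-\rho)|)$ against the threshold $c=m_{\alpha}(D+\eta)$; this is in the spirit of the standard proofs of the cited bound (cf.\ \cite{Lai:TIT:98}, \cite{Tartakovsky:TPIA:04}) and the bookkeeping you sketch is sound, including the crucial uniformity in $\tau$ of all error terms. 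What your route buys is a self-contained proof that makes explicit where the additive $|\log(1-\rho)|$ in the denominator comes from (the geometric prior's $(1-\rho)^{-m}$ factor) and which integrability conditions are really needed ($0<D(f_{\lambda_{i}},g_{\theta_{0}})<\infty$ and integrability of the log-likelihood ratio, which the paper and its citation also assume implicitly); what the paper's route buys is brevity, at the price of deferring all of this to the literature.
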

\begin{proof}
\eqref{eq:LB_P1} can be shown as follows:
\begin{eqnarray}
&&\inf_{\tau \in \mathcal{T}}\left\{ \mathbb{E}_{\pi}^{(\theta_{0}, \lambda_{i})}[(\tau - t)^{+}]: \sup_{\lambda \in \Lambda} P_{\pi}^{(\theta_{0}, \lambda)}(\tau < t) \leq \alpha \right\} \no\\
&&\hspace{10mm} \geq \inf_{\tau \in \mathcal{T} }\left\{ \mathbb{E}_{\pi}^{(\theta_{0}, \lambda_{i})}[(\tau - t)^{+}]: P_{\pi}^{(\theta_{0}, \lambda_{i})}(\tau < t) \leq \alpha \right\} \no\\
&&\hspace{10mm} \geq \frac{|\log \alpha|}{D(f_{\lambda_{i}}, g_{\theta_{0}})+|\log(1-\rho)|}(1+o(1)). \no
\end{eqnarray}
The last inequality is from the lower bound of the average detection delay in the classic Bayesian QCD problem (see Theorem 1 in \cite{Tartakovsky:TPIA:04}).
\end{proof}

\subsection{M-SR procedure and posterior probabilities}
The statistics in the M-SR procedure can be equivalently expressed in terms of posterior probabilities. Let $\Lambda_{D} = \{\lambda_{0}, \ldots, \lambda_{I-1}\}$ be the discrete set for implementing the M-SR procedure. Define posterior probabilities as
\begin{eqnarray}
\pi_{n}^{(\theta_{0}, \lambda_{i})} := P_{\pi}^{(\theta_{0}, \lambda_{i})}(t \leq n | \mathcal{F}_{n}), \quad i = 0, \ldots, I-1. \label{eq:def_postp}
\end{eqnarray}
We emphasize that \eqref{eq:def_postp} is defined on a probability measure given $\{\lambda = \lambda_{i}\}$ being the true post change parameter. By the definition of posterior probability, we have 
\begin{eqnarray}
\pi_{n}^{(\theta_{0}, \lambda_{i})} = \frac{\varrho_{n}^{(i)}(X_{1}, \ldots, X_{n})}{o_{n}(X_{1}, \ldots, X_{n}) +\varrho_{n}^{(i)}(X_{1}, \ldots, X_{n})},
\end{eqnarray}
in which
\begin{eqnarray}
&&\hspace{-8mm}o_{n}(X_{1}, \ldots, X_{n}) = (1-\rho)^{n}\prod_{q=1}^{n}g_{\theta_{0}}(X_{q}), \no\\
&&\hspace{-8mm}\varrho_{n}^{(i)}(X_{1}, \ldots, X_{n}) = \rho\sum_{k=1}^{n}(1-\rho)^{k-1}\prod_{q=1}^{k-1}g_{\theta_{0}}(X_{q})\prod_{q=k}^{n}f_{\lambda_{i}}(X_{q}). \no
\end{eqnarray}
By direct calculations, it is easy to verify that
\begin{eqnarray}
\log \frac{\pi_{n}^{(\theta_{0}, \lambda_{i})}}{1-\pi_{n}^{(\theta_{0}, \lambda_{i})}} = \log \rho + \log R_{n}^{(\theta_{0}, \lambda_{i})}.  \label{eq:Lambda1}
\end{eqnarray}
\eqref{eq:Lambda1} reveals an insightful relationship between the M-SR procedure and the posterior probability of the change-point occurrence, it plays an important role in analyzing the false alarm probability of the proposed algorithm. Despite $\Lambda$ is an uncountable set in general, the proposed M-SR procedure only has finitely many statistics $R_{n}^{(\theta_{0}, \lambda_{i})}$. Since each $R_{n}^{(\theta_{0}, \lambda_{i})}$ corresponds to a posterior probability by $\eqref{eq:Lambda1}$, it is possible to bound the false alarm probability for each $\tau_{R}^{(i)}$. Then, the overall false alarm probability can be bounded by the union bound inequality. The detailed technical proof is presented in Lemma \ref{lem:Pfa} in the sequel.


\subsection{$\epsilon-$ optimality of the proposed multi-chart procedures} \label{sec:optimal_interval}

Let $\Lambda_{D} = \{\lambda_{0}, \ldots, \lambda_{I-1}\}$ with $\lambda_{0} < \lambda_{2} < \ldots < \lambda_{I-1} $ be the set for implementing the M-SR procedure and the modified M-SR procedure.
Since $\Lambda_{D}$ is a finite set in general but $\Lambda$ is an uncountable infinite set, $\tau_{R}$ and $\tau_{C}$ constructed from $\Lambda_{D}$ will not be asymptotically optimal for all points in $\Lambda$. To measure the loss of optimality, we define $\epsilon-$optimality as follows:
\begin{define}
Let $\tau_{0}$ be a detection procedure that satisfies the false alarm constraint. $\tau_{0}$ is called (asymptotically) $\epsilon-$optimal if
\begin{eqnarray}
\eta(\tau_{0}; \lambda) := \lim_{\alpha \rightarrow 0} \frac{\inf_{\tau \in \mathcal{T}}\left\{ \mathrm{ADD}(\tau; \theta_{0}, \lambda): \mathrm{PFA}(\tau; \theta_{0}, \lambda) \leq \alpha \right\}}{\mathrm{ADD}(\tau_{0}; \theta_{0}, \lambda)} \geq 1-\epsilon \label{eq:e_opt}
\end{eqnarray}
for all $\lambda \in \Lambda$.
\end{define}
The definition above resembles the idea of $\epsilon-$optimality for the case of non-Bayesian QCD problem, which is originally proposed in [\citen{Nikiforov:SP:01}, Section 2.3.3]. In particular, given the true post-change parameter being $\lambda$, the numerator of \eqref{eq:e_opt} is the infimum of ADD for all qualified detection procedure; hence it presents the lower bound of the detection delay. The denominator of \eqref{eq:e_opt} is the ADD of a given detection procedure $\tau_{0}$. As a result, $\eta(\tau_{0}; \lambda)$ is a measurement of detection efficiency or optimality loss; it is easy to see that $0 \leq \eta(\tau_{0}; \lambda) \leq 1$ and that $\tau_{0}$ is asymptotically optimal for post-change parameter $\lambda$ when $\eta(\tau_{0}; \lambda) = 1$. We note that for a given detection procedure $\tau_{0}$, the detection efficiency $\eta(\tau_{0}; \lambda)$ also depends on $\lambda$; hence we call $\tau_{0}$ asymptotically $\epsilon-$optimal if \eqref{eq:e_opt} holds for all $\lambda\in\Lambda$.

To study the asymptotic performance of $\tau_{R}$ and $\tau_{C}$, we need to impose some additional assumptions on $f_{\lambda_{i}}$ and $g_{\theta_{0}}$. Specifically, for any given $\varepsilon > 0$, we define the random variable
\begin{eqnarray}
T_{\lambda_{i}, \varepsilon}^{(k, \lambda)} := \sup\left\{ n \geq 1: \Bigg|\frac{1}{n}\sum_{q=k}^{k+n-1}\left[ \frac{f_{\lambda_{i}}(X_{q})}{g_{\theta_{0}}(X_{q})} + |\log(1-\rho)| \right] - d(\theta_{0}, \lambda_{i}; \lambda) \Bigg| > \varepsilon \right\}, \label{eq:Tk}
\end{eqnarray}
in which
\begin{eqnarray}
d(\theta_{0}, \lambda_{i}; \lambda) = D(f_{\lambda}, g_{\theta_{0}}) - D(f_{\lambda}, f_{\lambda_{i}}) + |\log(1-\rho)|,
\end{eqnarray}
and the supremum of an empty set is defined as $0$. We note that when $\lambda$ is the true post-change parameter, on the event $\{ t=k \}$, we have
\begin{eqnarray}
\frac{1}{n}\sum_{q=k}^{k+n-1}\left[ \frac{f_{\lambda_{i}}(X_{q})}{g_{\theta_{0}}(X_{q})} + |\log(1-\rho)| \right] \rightarrow d(\theta_{0}, \lambda_{i}; \lambda), \quad P_{k}^{(\theta_{0}, \lambda)}-\textrm{almost surely}.
\end{eqnarray}
Hence, we have $T_{\lambda_{i}, \varepsilon}^{(k, \lambda)} < \infty$ almost surely under $P_{k}^{(\theta_{0}, \lambda)}$ for all $\lambda \in \Lambda$. We make additional assumptions that for any given $\lambda_{i} \in \Lambda_{D}$, $T_{\lambda_{i}, \varepsilon}^{(k, \lambda)}$ satisfies the condition
\begin{eqnarray}
\mathbb{E}_{k}^{(\theta_{0}, \lambda)}\left[ T_{\lambda_{i}, \varepsilon}^{(k, \lambda)} \right] < \infty \quad \forall \varepsilon > 0, \quad \forall k \geq 1 \text{ and } \forall \lambda\in\Lambda \label{eq:quickly_converge},
\end{eqnarray}
and
\begin{eqnarray}
\mathbb{E}_{\pi}^{(\theta_{0}, \lambda)}\left[ T_{\lambda_{i}, \varepsilon}^{(\lambda)}\right] = \sum_{k=1}^{\infty} \mathbb{E}_{k}^{(\theta_{0}, \lambda)}\left[ T_{\lambda_{i}, \varepsilon}^{(k, \lambda)} \right]P(t = k) < \infty, \quad \forall \varepsilon > 0 \text{ and } \forall \lambda\in\Lambda. \label{eq:averege_quickly}
\end{eqnarray}
With these assumptions, we have the following conclusion for the M-SR procedure and the modified M-SR procedure:
\begin{thm}\label{thm:general_opt}
Let $\lambda$ be the true post-change parameter. As $\alpha \rightarrow 0$, $\tau_{R}$ defined in \eqref{eq:tau_r} and $\tau_{C}$ defined in \eqref{eq:tau_c} satisfy the false alarm constraint by setting $B_{0} = \ldots = B_{I-1} = I(\rho\alpha)^{-1}$. In addition, the detection delay
\begin{eqnarray}
\mathbb{E}_{\pi}^{(\theta_{0}, \lambda)}[(\tau_{R} - t)^{+}]  &\leq& \mathbb{E}_{\pi}^{(\theta_{0}, \lambda)}\left[(\tau_{C}-t)^{+}\right] \no\\
&\leq& \frac{|\log \alpha|}{D(f_{\lambda}, g_{\theta_{0}}) - \min_{\lambda_{i} \in \Lambda_{D}} D(f_{\lambda}, f_{\lambda_{i}}) + |\log(1-\rho)|}(1+o(1)). \no
\end{eqnarray}
\end{thm}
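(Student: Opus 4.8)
The plan is to split the claim into the false-alarm guarantee and the delay bound, and in both cases to reduce the multi-chart analysis to a single best-approximating chart.

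\textbf{False alarm.} I would first fix a single chart $i$ and combine the posterior identity \eqref{eq:Lambda1} with the fact, established around \eqref{eq:avg_worst}, that the false-alarm probability of any $\mathcal{F}_{n}$-adapted stopping rule does not depend on the true post-change parameter. At the (a.s.\ finite) time $\tau_{R}^{(i)}$ one has $R_{\tau_{R}^{(i)}}^{(\theta_{0},\lambda_{i})}>B_{i}$, so \eqref{eq:Lambda1} yields $\pi_{\tau_{R}^{(i)}}^{(\theta_{0},\lambda_{i})}>\rho B_{i}/(1+\rho B_{i})$; since $P_{\pi}^{(\theta_{0},\lambda_{i})}(\tau_{R}^{(i)}<t\mid\mathcal{F}_{\tau_{R}^{(i)}})=1-\pi_{\tau_{R}^{(i)}}^{(\theta_{0},\lambda_{i})}$, taking expectations gives $P_{\pi}^{(\theta_{0},\lambda_{i})}(\tau_{R}^{(i)}<t)\le 1/(\rho B_{i})$. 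As this probability is independent of $\lambda$, a union bound over $i=0,\dots,I-1$ with $B_{i}=I(\rho\alpha)^{-1}$ gives $\sup_{\lambda\in\Lambda}\mathrm{PFA}(\tau_{R};\theta_{0},\lambda)\le\sum_{i=0}^{I-1}(\rho B_{i})^{-1}=\alpha$. For $\tau_{C}$ this is then immediate: since $C_{n}^{(\theta_{0},\lambda_{i})}\le R_{n}^{(\theta_{0},\lambda_{i})}$ pointwise (a maximum of nonnegative terms is at most their sum), one has $\tau_{C}^{(i)}\ge\tau_{R}^{(i)}$ for every $i$, hence $\tau_{C}\ge\tau_{R}$, so $\{\tau_{C}<t\}\subseteq\{\tau_{R}<t\}$ and the bound transfers; the same ordering $\tau_{R}\le\tau_{C}$ also gives the first inequality $\mathbb{E}_{\pi}^{(\theta_{0},\lambda)}[(\tau_{R}-t)^{+}]\le\mathbb{E}_{\pi}^{(\theta_{0},\lambda)}[(\tau_{C}-t)^{+}]$.

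\textbf{Delay.} Because $\tau_{C}=\min_{i}\tau_{C}^{(i)}$, it suffices to bound $\mathbb{E}_{\pi}^{(\theta_{0},\lambda)}[(\tau_{C}^{(i^{*})}-t)^{+}]$ for the single chart $i^{*}:=\argmin_{\lambda_{i}\in\Lambda_{D}}D(f_{\lambda},f_{\lambda_{i}})$, since $(\tau_{C}-t)^{+}\le(\tau_{C}^{(i^{*})}-t)^{+}$. I would then run the classical one-chart argument on $\{t=k\}$. Bounding $C_{n}^{(\theta_{0},\lambda_{i^{*}})}$ from below by the single term $\prod_{q=k}^{n}(1-\rho)^{-1}f_{\lambda_{i^{*}}}(X_{q})/g_{\theta_{0}}(X_{q})$ of its defining maximum gives, for $n\ge k$, $\log C_{n}^{(\theta_{0},\lambda_{i^{*}})}\ge\sum_{q=k}^{n}\bigl[\log\frac{f_{\lambda_{i^{*}}}(X_{q})}{g_{\theta_{0}}(X_{q})}+|\log(1-\rho)|\bigr]$, whence $(\tau_{C}^{(i^{*})}-k)^{+}\le\inf\bigl\{m\ge 0:\sum_{q=k}^{k+m}\bigl[\log\frac{f_{\lambda_{i^{*}}}(X_{q})}{g_{\theta_{0}}(X_{q})}+|\log(1-\rho)|\bigr]>\log B_{i^{*}}\bigr\}$. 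Under $P_{k}^{(\theta_{0},\lambda)}$ these increments are i.i.d.\ with mean $d(\theta_{0},\lambda_{i^{*}};\lambda)=D(f_{\lambda},g_{\theta_{0}})-D(f_{\lambda},f_{\lambda_{i^{*}}})+|\log(1-\rho)|$. Using the last-exit time $T_{\lambda_{i^{*}},\varepsilon}^{(k,\lambda)}$ of \eqref{eq:Tk}: for every $n>T_{\lambda_{i^{*}},\varepsilon}^{(k,\lambda)}$ the partial average is at least $d(\theta_{0},\lambda_{i^{*}};\lambda)-\varepsilon$, so the above first-passage index is bounded by $\lceil\log B_{i^{*}}/(d(\theta_{0},\lambda_{i^{*}};\lambda)-\varepsilon)\rceil+T_{\lambda_{i^{*}},\varepsilon}^{(k,\lambda)}+1$. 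Taking $\mathbb{E}_{k}^{(\theta_{0},\lambda)}$, averaging over $k$ against $P(t=k)$, invoking \eqref{eq:averege_quickly} to bound $\sum_{k}\mathbb{E}_{k}^{(\theta_{0},\lambda)}[T_{\lambda_{i^{*}},\varepsilon}^{(k,\lambda)}]P(t=k)$ by a constant free of $\alpha$, and substituting $\log B_{i^{*}}=\log I+|\log\rho|+|\log\alpha|=|\log\alpha|(1+o(1))$, this yields $\mathbb{E}_{\pi}^{(\theta_{0},\lambda)}[(\tau_{C}^{(i^{*})}-t)^{+}]\le|\log\alpha|(1+o(1))/(d(\theta_{0},\lambda_{i^{*}};\lambda)-\varepsilon)$. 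Letting $\varepsilon\downarrow 0$ and noting $d(\theta_{0},\lambda_{i^{*}};\lambda)=D(f_{\lambda},g_{\theta_{0}})-\min_{\lambda_{i}\in\Lambda_{D}}D(f_{\lambda},f_{\lambda_{i}})+|\log(1-\rho)|$ completes the argument.

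\textbf{Main obstacle.} The false-alarm part is routine once the posterior identity \eqref{eq:Lambda1} is in hand. The technical heart is the delay estimate, specifically the control of the first-passage overshoot \emph{uniformly in the random change-point $k$}: this is exactly where the finite-expectation hypotheses \eqref{eq:quickly_converge} and \eqref{eq:averege_quickly} on the last-exit time $T_{\lambda_{i},\varepsilon}^{(k,\lambda)}$ enter, since without them the sum over $k$ of the $\mathbb{E}_{k}^{(\theta_{0},\lambda)}[T_{\lambda_{i^{*}},\varepsilon}^{(k,\lambda)}]$ need not be finite. I would also flag that the bound is only meaningful when $d(\theta_{0},\lambda_{i^{*}};\lambda)=D(f_{\lambda},g_{\theta_{0}})-\min_{\lambda_{i}\in\Lambda_{D}}D(f_{\lambda},f_{\lambda_{i}})+|\log(1-\rho)|>0$, which holds provided $\Lambda_{D}$ is chosen fine enough that the best approximation error $\min_{\lambda_{i}\in\Lambda_{D}}D(f_{\lambda},f_{\lambda_{i}})$ stays below $D(f_{\lambda},g_{\theta_{0}})+|\log(1-\rho)|$.
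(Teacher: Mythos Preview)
Your proposal is correct and, for the false-alarm part, essentially identical to the paper's Lemma~\ref{lem:Pfa}: the posterior identity \eqref{eq:Lambda1} plus the $\lambda$-independence of the pre-change law plus a union bound over $i$.

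For the delay there is a small but genuine difference. The paper introduces an auxiliary one-sided SPRT $\tau_{S}^{(i)}$ based on the full sum $S_{1:n}^{(\theta_{0},\lambda_{i})}=\sum_{q=1}^{n}[\,\cdot\,]$, uses the chain $\tau_{R}\le\tau_{C}\le\tau_{S}$, and then in Lemma~\ref{lem:Add} splits $S_{1:n}=S_{1:k-1}+S_{k:n}$ on $\{t=k\}$; this forces an extra step showing that the pre-change piece $\mathbb{E}_{\pi}^{(\theta_{0},\lambda)}[S_{1:t-1}^{(\theta_{0},\lambda_{i})}]$ is finite so that it can be absorbed into the $o(1)$. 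You instead exploit the max in $C_{n}^{(\theta_{0},\lambda_{i^{*}})}$ to lower-bound it directly by the single term starting at $k$, i.e.\ by $S_{k:n}^{(\theta_{0},\lambda_{i^{*}})}$, and run the last-exit-time argument on that post-change sum only. This is a cleaner route for $\tau_{C}$ because the pre-change contribution never appears; both arguments ultimately rest on the same hypotheses \eqref{eq:quickly_converge}--\eqref{eq:averege_quickly} and yield the same bound. The paper's detour through $\tau_{S}$ buys nothing extra here, since you recover the $\tau_{R}$ bound anyway via $\tau_{R}\le\tau_{C}$.
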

\begin{proof}
Please see Section \ref{subsec:optimal}.
\end{proof}
\begin{rmk}
\eqref{eq:quickly_converge} and \eqref{eq:averege_quickly} are modifications of the well known assumptions ``$r$-quick convergence'' and ``average-$r$-quick convergence''\cite{Tartakovsky:TPIA:04}, respectively, in the classic Bayesian QCD when $r=1$. The ``$r$-quick convergence'' was originally proposed in \cite{Lai:AnP:76} and has been used in \cite{Tartakovsky:SISP:98, Dragalin:TIT:99} to show the asymptotic optimality of the sequential multi-hypothesis testing. The ``average-$r$-quick convergence'' was originally introduced in \cite{Tartakovsky:TPIA:04} to show asymptotic optimality of the SR procedure in the Bayesian QCD problem.
\end{rmk}

We discuss the conclusion obtained in Theorem \ref{thm:general_opt} in the following. First, we note that if the true post-change parameter $\lambda$ belongs to $\Lambda_{D}$, we have $\min_{\lambda_{i} \in \Lambda_{D}} D(f_{\lambda}, f_{\lambda_{i}}) = 0$; then the upper bound of ADD meets the lower bound, which indicates that $\tau_{R}$ and $\tau_{C}$ are asymptotically optimal for the parameters within $\Lambda_{D}$. As a result, we have the following corollary: 
\begin{cor} 
When $\Lambda$ is a finite set, by choosing $\Lambda_{D} = \Lambda$ and setting $B_{i} = \ldots = B_{I-1} = I(\rho\alpha)^{-1}$, $\tau_{R}$ and $\tau_{C}$ are first order asymptotically optimal as $\alpha \rightarrow 0$. In addition
\begin{eqnarray}
\mathbb{E}_{\pi}^{(\theta_{0}, \lambda_{i})}\left[(\tau_{R}-t)^{+}\right] &\leq& \mathbb{E}_{\pi}^{(\theta_{0}, \lambda_{i})}\left[(\tau_{C}-t)^{+}\right] \no\\
&\leq&\frac{|\log \alpha|}{D(f_{\lambda_{i}}, g_{\theta_{0}})+|\log(1-\rho)|}(1+o(1)) 
\end{eqnarray}
for all $\lambda_{i} \in \Lambda$.
\end{cor}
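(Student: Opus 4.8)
The plan is to read off the corollary as an immediate specialization of Theorem~\ref{thm:general_opt}, then close the gap to the lower bound of Theorem~\ref{lem:LB}. First I would note that when $\Lambda$ is finite we are free to take $\Lambda_{D}=\Lambda$, so the hypotheses \eqref{eq:quickly_converge}--\eqref{eq:averege_quickly} needed for Theorem~\ref{thm:general_opt} are exactly the standing assumptions (they are imposed for each of the finitely many $\lambda_{i}\in\Lambda_{D}=\Lambda$, and for each $\lambda\in\Lambda$). Consequently, with the threshold choice $B_{0}=\cdots=B_{I-1}=I(\rho\alpha)^{-1}$, both $\tau_{R}$ and $\tau_{C}$ satisfy the false-alarm constraint $\sup_{\lambda\in\Lambda}\mathrm{PFA}(\tau;\theta_{0},\lambda)\le\alpha$ by the first assertion of Theorem~\ref{thm:general_opt}.

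Next I would evaluate the delay bound of Theorem~\ref{thm:general_opt} at a true post-change parameter $\lambda=\lambda_{i}\in\Lambda=\Lambda_{D}$. Since the Kullback--Leibler divergence is nonnegative and vanishes when its two arguments coincide, $0\le\min_{\lambda_{j}\in\Lambda_{D}}D(f_{\lambda_{i}},f_{\lambda_{j}})\le D(f_{\lambda_{i}},f_{\lambda_{i}})=0$, so this minimum equals $0$. Substituting into Theorem~\ref{thm:general_opt} gives
\begin{eqnarray}
\mathbb{E}_{\pi}^{(\theta_{0},\lambda_{i})}\!\left[(\tau_{R}-t)^{+}\right] &\le& \mathbb{E}_{\pi}^{(\theta_{0},\lambda_{i})}\!\left[(\tau_{C}-t)^{+}\right] \no \\
&\le& \frac{|\log\alpha|}{D(f_{\lambda_{i}},g_{\theta_{0}})+|\log(1-\rho)|}\,(1+o(1)), \no
\end{eqnarray}
which is precisely the asserted inequality; the first inequality is already part of Theorem~\ref{thm:general_opt} and follows pathwise from $R_{n}^{(\theta_{0},\lambda_{i})}\ge C_{n}^{(\theta_{0},\lambda_{i})}$ (a sum of nonnegative terms dominates the corresponding maximum), whence $\tau_{R}^{(i)}\le\tau_{C}^{(i)}$ for each $i$ and therefore $\tau_{R}\le\tau_{C}$.

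Finally I would match this upper bound against the lower bound of Theorem~\ref{lem:LB}. Because the event $\{\tau<t\}$ carries no information about $\lambda$ (all observations on it are drawn from $g_{\theta_{0}}$), $\mathrm{PFA}(\tau;\theta_{0},\lambda)$ takes the same value for every $\lambda$, as already recorded in \eqref{eq:avg_worst}; hence the admissible classes of stopping times in \eqref{eq:P1}, in the definition \eqref{eq:e_opt}, and in the statement of Theorem~\ref{lem:LB} all coincide. It follows that the numerator of $\eta(\tau_{R};\lambda_{i})$ in \eqref{eq:e_opt} is bounded below by $|\log\alpha|/(D(f_{\lambda_{i}},g_{\theta_{0}})+|\log(1-\rho)|)\,(1+o(1))$; dividing this by the upper bound just obtained for $\mathrm{ADD}(\tau_{R};\theta_{0},\lambda_{i})$ gives $\eta(\tau_{R};\lambda_{i})\ge1$, and since $\eta(\cdot;\cdot)\le1$ always, $\eta(\tau_{R};\lambda_{i})=1$. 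The identical argument applies to $\tau_{C}$, so both procedures are first-order asymptotically optimal for every $\lambda_{i}\in\Lambda$.

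I do not expect a genuine obstacle here: the statement is a direct corollary of Theorem~\ref{thm:general_opt}. The only point warranting care is the bookkeeping in the last paragraph, namely verifying that the $\lambda$-independence of the false-alarm probability makes the per-parameter constraint, the worst-case constraint, and the $\Lambda_{D}$-supremum constraint describe one and the same admissible class of stopping times, which is exactly what lets the upper and lower bounds be compared to first order with no residual slack.
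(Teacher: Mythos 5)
Your proposal is correct and follows essentially the same route as the paper: it specializes Theorem~\ref{thm:general_opt} with $\Lambda_{D}=\Lambda$, uses $\min_{\lambda_{j}\in\Lambda_{D}}D(f_{\lambda_{i}},f_{\lambda_{j}})=D(f_{\lambda_{i}},f_{\lambda_{i}})=0$ to collapse the delay bound, and matches it to the lower bound of Theorem~\ref{lem:LB} (whose comparison is legitimate precisely because the false-alarm probability is $\lambda$-independent, as in \eqref{eq:avg_worst}). No gaps.
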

For the general case $\Lambda \neq \Lambda_{D}$, by Theorem \ref{lem:LB} and Theorem \ref{thm:general_opt}, we have
\begin{eqnarray}
\eta(\tau_{R}; \lambda) \geq \eta(\tau_{C}; \lambda) \geq 1-\frac{\min_{\lambda_{i} \in \Lambda_{D}}D(f_{\lambda}, f_{\lambda_{i}})}{D(f_{\lambda}, g_{\theta_{0}}) + |\log(1-\rho)|} \no
\end{eqnarray}
for any $\lambda \in \Lambda$. Then, it is straightforward to see that $\tau_{R}$ and $\tau_{C}$ are asymptotically $\epsilon-$optimal if we properly select $\Lambda_{D}$ such that
\begin{eqnarray}
\frac{\min_{\lambda_{i} \in \Lambda_{D}}D(f_{\lambda}, f_{\lambda_{i}})}{D(f_{\lambda}, g_{\theta_{0}}) + |\log(1-\rho)|} \leq \epsilon, \quad \forall \lambda \in \Lambda. \label{eq:e_cond}
\end{eqnarray}
In practice, $\Lambda_{D}$ can be designed \emph{offline} by using \eqref{eq:e_cond} for the implementation of the M-SR procedure and the modified M-SR procedure. If $D(f_{\lambda}, f_{\lambda_{i}})$ satisfies certain properties, the design procedure could be further simplified. Let $c:=\min_{\lambda \in \Lambda}D(f_{\lambda}, g_{\theta_{0}}) + |\log(1-\rho)|$ be a constant, it is easy to see that
$$ \min_{\lambda_{i} \in \Lambda_{D}} D(f_{\lambda}, f_{\lambda_{i}}) \leq c \epsilon  $$
is a stronger condition than \eqref{eq:e_cond}. If $D(f_{\lambda}, f_{\lambda_{i}})$ is a Lipschitz continuous function w.r.t. $\lambda_{i}$, then we have
\begin{eqnarray}
D(f_{\lambda}, f_{\lambda_{i}}) = |D(f_{\lambda}, f_{\lambda_{i}}) - D(f_{\lambda}, f_{\lambda})| \leq K|\lambda_{i} - \lambda|, \label{eq:Lip}
\end{eqnarray}
where $K>0$ is a real constant. \eqref{eq:Lip} provides a simple method to design $\Lambda_{D}$: the observer can first divide $\Lambda$ into a series of disjoint sub-intervals whose length is bounded by $c\epsilon / K$, and then constructs $\Lambda_{D}$ by picking one point in each sub-interval. However, we also point out that the cardinality of $\Lambda_{D}$ is in inverse proportion to $\epsilon$, a larger $I$ will lead to more computation and longer detection delay. The asymptotic result in Theorem \ref{thm:general_opt} holds when $\alpha$ is infinitely smaller than $I^{-1}$.

\begin{exmpl}
Assume the pre-change distribution is $\mathcal{N}(0, \sigma^2)$, the post-change distribution is $\mathcal{N}(\lambda, \sigma^{2})$, and $\lambda \in \Lambda = [0.37, 2.63]$. For $\lambda \in \Lambda$ and $\lambda_{i} \in \Lambda_{D}$, it is easy to verify that
\begin{eqnarray}
D(f_{\lambda}, f_{\lambda_{i}}) = \frac{(\lambda-\lambda_{i})^2}{2\sigma^2}.
\end{eqnarray}
When $\sigma^2 = 1$ and $\rho=0.01$, the observer can control the tolerant loss $\epsilon=0.2$ by setting $\Lambda_{D} = [0.5483, 1.4517]$, which is illustrated in Figure \ref{fig:e_optimal}. 
Figure \ref{fig:e_optimal} shows that the performance of M-SR procedure (the black solid curve) lies within between the lower bound of ADD (the blue dot-dash line) and the bound of $\epsilon-$optimality (the red dash line). In addition, the performance of M-SR procedure meets the lower bound at $\lambda = 0.5483$ and $\lambda = 1.4517$.
\begin{figure}[thb]
\centering
\includegraphics[width=0.5 \textwidth]{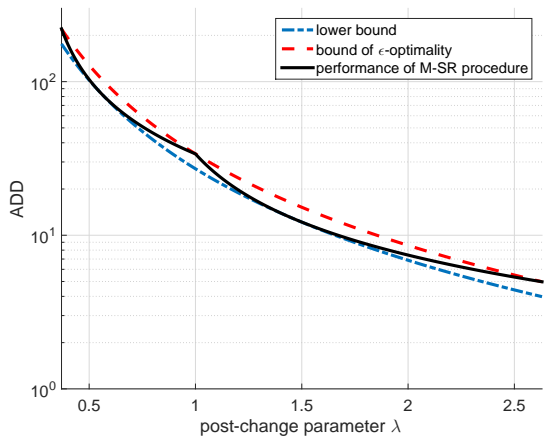}
\caption{$\epsilon-$optimality of M-SR procedure}
\label{fig:e_optimal}
\end{figure}
\end{exmpl}

\section{Extension: Multi-Source Monitoring Problem}\label{sec:ext}
In Section \ref{sec:model} and Section \ref{sec:optimal}, we focus on low complexity algorithms for solving the problem that an abrupt change happens on a single observation sequence. In this section, we extend our study to the multi-source monitoring problem. Particularly, in this section, we assume that there are $L$ mutually independent sources and 
that the observer is capable of monitoring all sources simultaneously; hence the observer obtains a $L-$dimensional random vector at each time slot. We denote the observation sequence as $\{ \mathbf{X}_{1}, \mathbf{X}_{2}, \ldots \}$, in which $\mathbf{X}_{k} = [X_{1, k}, X_{2, k}, \ldots, X_{L, k}]$ is the observation obtained at time slot $k$, and $X_{l, k}$ is the sample from the $l^{th}$ source at time slot $k$. We note that $l \in \{1, \ldots, L\}$ is the source index and $k\in\{1, 2, \ldots\}$ is the time index.

Similar to the model presented in Section \ref{sec:model}, the distribution of the observation sequence changes abruptly at the change-point $t$. The prior distribution of $t$ is the same as \eqref{eq:ext_geometry}. In this section, we further assume that
the distributions of all $L$ sources change simultaneously at $t$. 

Pre-change distributions of the observed sources are perfectly known by the observer but the post-change distributions contain unknown parameters. For the $l^{th}$ source, let $g_{\theta_{l, 0}}(\cdot)$ be the pre-change distribution, in which $\theta_{l, 0}$ is a known parameter. Let $f_{\lambda_{l}}(\cdot)$ be the post-change distribution, which contains an unknown parameter $\lambda_{l}$. Denote $\Lambda_{l}$ as the feasible set of $\lambda_{l}$. Throughout this section, $\Lambda_{l}$ is assumed to be a finite set that can be written as
\begin{eqnarray}
\Lambda_{l} = \{ \lambda_{l, 0}, \lambda_{l, 1}, \ldots, \lambda_{l, I_{l}-1} \}; \no
\end{eqnarray}
hence $|\Lambda_{l}| = I_{l}$. For the scenario that $\Lambda_{l}$ is an interval, the asymptotic optimality result obtained in Theorem \ref{thm:win_tau_c} in the sequel can be easily generalized to the asymptotic $\epsilon-$optimality as we have done in Section \ref{sec:optimal_interval}. Observations from the $l^{th}$ source $\{X_{l, 1}, \ldots\, X_{l, t}, X_{l, t+1}, \ldots\}$ are conditionally i.i.d. (conditioned on the change-point and the post-change parameter).

Therefore, the distribution of $\mathbf{X}_{k}$ contains $L$ parameters. We denote these parameters in a vector form. Specifically, let $\boldsymbol{\theta}_{0} = [\theta_{1,0}, \theta_{2,0}, \ldots, \theta_{L,0}]$ be the parametric vector for pre-change distribution. As our assumption, $\boldsymbol{\theta}_{0}$ is known by observer. Let $\boldsymbol{\lambda}_{u} = [\lambda_{1, i_{1}}, \lambda_{2, i_{2}}, \ldots, \lambda_{L, i_{L}}]$, with $\lambda_{l, i_{l}} \in \Lambda_{l}, l=1, \ldots, L$, be the vector of post-change parameters, which is unknown to the observer. Denote
\begin{eqnarray}
\mathbf{\Lambda} = \Lambda_{1} \times \Lambda_{2} \times \cdots \times \Lambda_{L}
\end{eqnarray}
as the set of all possible post-change parameter vector. Obviously, $|\mathbf{\Lambda}| = \prod_{l=1}^{L}I_{l}$; hence for $\boldsymbol{\lambda}_{u} = [\lambda_{1, i_{1}}, \lambda_{2, i_{2}}, \ldots, \lambda_{L, i_{L}}]$, we have $u \in \{0, 1, \ldots, \prod_{l=1}^{L}I_{l}-1\}$ and $i_{l} \in \{0, \ldots, I_{l}-1\}$. We note that sub-index $u$ and $(i_{1}, \ldots, i_{L})$ have a one-to-one relationship, for example, one may set
$$ u = i_{1} \times I_{2} \times I_{3} \times \cdots \times I_{L} + i_{2} \times I_{3} \times \cdots \times I_{L} + \cdots + i_{L-1}\times I_{L} + i_{L}. $$

Using these notations, the problem we aim to solve can be written as:
\begin{eqnarray}
&&\inf_{\tau \in \mathcal{T}} \mathbb{E}_{\pi}^{(\boldsymbol{\theta}_{0}, \boldsymbol{\lambda})}[(\tau - t)^{+}] \text{ for all } \boldsymbol{\lambda} \in \mathbf{\Lambda}, \no\\
&&\text{subject to } \sup_{\boldsymbol{\lambda} \in \mathbf{\Lambda}} P_{\pi}^{(\boldsymbol{\theta}_{0}, \boldsymbol{\lambda})}(\tau < t) \leq \alpha. \label{eq:P2}
\end{eqnarray}
As a simple extension of the conclusions obtained in Section \ref{sec:optimal}, we know that
\begin{eqnarray}
&& R_{n}^{(u)} := \sum_{k=1}^{n} \prod_{q=k}^{n} \frac{1}{1-\rho} \frac{f_{\boldsymbol{\lambda}_{u}}(X_{1, q}, X_{2, q},\ldots, X_{L, q})}{g_{\boldsymbol{\theta}_{0}}(X_{1, q}, X_{2, q},\ldots, X_{L, q})}, \label{eq:L_Rn}\\
&& \tau_{R} = \inf\left\{n \geq 0: \max_{u \in \{0, 1, \ldots, \prod_{l=1}^{L} I_{l} - 1\}} \log R_{n}^{(u)} \geq \log B \right\}, \label{eq:L_tauR}
\end{eqnarray}
and
\begin{eqnarray}
&& C_{n}^{(u)} := \max_{1\leq k\leq n} \prod_{q=k}^{n} \frac{1}{1-\rho} \frac{f_{\boldsymbol{\lambda}_{u}}(X_{1, q}, X_{2, q},\ldots, X_{L, q})}{g_{\boldsymbol{\theta}_{0}}(X_{1, q}, X_{2, q},\ldots, X_{L, q})}, \label{eq:L_Cn}\\
&& \tau_{C} = \inf\left\{n\geq0: \max_{u \in \{0, 1, \ldots, \prod_{l=1}^{L} I_{l} - 1\}} \log C_{n}^{(u)} \geq \log B\right\} \label{eq:L_tauC}
\end{eqnarray}
are asymptotically optimal if we choose $B=(\rho\alpha)^{-1}\prod_{l=1}^{L}I_{l}$. However, $\tau_{R}$ and $\tau_{C}$ have obvious drawbacks since they need to maintain $\prod_{l=1}^{L}I_{l}$ many detection statistics at each time slot; hence both the computational complexity and the storage complexity increase exponentially w.r.t. $L$; these shortcomings limit the practical implementation of $\tau_{R}$ and $\tau_{C}$ when $L$ is large. In the rest of this section, we propose a \emph{window-based modified M-SR detection procedure}, and show that the proposed algorithm is asymptotically optimal and has low computational complexity. In particular, the proposed algorithm can be written as
\begin{eqnarray}
&& \tilde{C}_{n}^{(u)} := \max_{n-m_{\alpha}\leq k\leq n} \prod_{q=k}^{n} \frac{1}{1-\rho} \frac{f_{\boldsymbol{\lambda}_{u}}(X_{1, q}, X_{2, q},\ldots, X_{L, q})}{g_{\boldsymbol{\theta}_{0}}(X_{1, q}, X_{2, q},\ldots, X_{L, q})}, \label{eq:win_c} \\
&& \tilde{\tau}_{C}^{(u)} = \inf\{n\geq0: \log \tilde{C}_{n}^{(u)} \geq \log B_{u}\}, \label{eq:win_tau_ci} \\
&& \tilde{\tau}_{C} = \min_{u \in \{0, 1, \ldots, \prod_{l=1}^{L} I_{l} - 1\}} \tilde{\tau}_{C}^{(u)}. \label{eq:win_tau_c}
\end{eqnarray}
We note that the proposed procedure is the same as the modified M-SR procedure except that $\tilde{C}_{n}^{(u)}$ takes maximum over the latest $m_{\alpha}$ likelihood ratios instead of all likelihood ratios. The asymptotic optimality of $\tilde{\tau}_{C}$ is presented in the following theorem:
\begin{thm} \label{thm:win_tau_c}
By choosing $m_{\alpha}$ such that
$$\liminf \frac{m_{\alpha}}{|\log \alpha|} > \max_{\boldsymbol{\lambda}_{u} \in \mathbf{\Lambda}} \frac{1}{d(\boldsymbol{\theta}_{0}, \boldsymbol{\lambda}_{u})} \text{  but  } \log m_{\alpha} = o(|\log \alpha|),$$
and setting threshold $B = (\rho\alpha)^{-1}\prod_{l=1}^{L}I_{l}$, then $\tilde{\tau}_{C}$ satisfies the false alarm constraint. In addition, as $\alpha \rightarrow 0$ ,
\begin{eqnarray}
\mathbb{E}_{\pi}^{(\boldsymbol{\theta}_{0}, \boldsymbol{\lambda}_{u})}[(\tilde{\tau}_{C} - t)^{+}] \leq \frac{|\log B|}{d(\boldsymbol{\theta}_{0}, \boldsymbol{\lambda}_{u})}(1+o(1)), \text{ for all } \boldsymbol{\lambda}_{u} \in \mathbf{\Lambda} \label{eq:L_delay}
\end{eqnarray}
in which
\begin{eqnarray}
d(\boldsymbol{\theta}_{0}, \boldsymbol{\lambda}_{u}) = D(f_{\boldsymbol{\lambda}_{u}}, g_{\boldsymbol{\theta}_{0}}) + |\log(1-\rho)| = \sum_{l=1}^{L} D(f_{\lambda_{l, i_{l}}}, g_{\theta_{l, 0}}) + |\log(1-\rho)|.
\end{eqnarray}
\end{thm}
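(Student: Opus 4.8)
The plan is to prove the false alarm bound by a pathwise domination of the windowed statistics by the ordinary (non‑windowed) multi‑source Shiryaev--Roberts statistics, and to prove the delay bound by conditioning on the change‑point, exploiting the fact that as long as the true change‑point still lies inside the sliding window the windowed statistic dominates the genuine post‑change log‑likelihood random walk.

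\emph{False alarm.} For every $n$ and $u$, the maximum defining $\tilde{C}_n^{(u)}$ runs over the sub‑window $\{n-m_\alpha,\dots,n\}\subseteq\{1,\dots,n\}$, and a maximum of nonnegative terms is at most their sum; hence $\tilde{C}_n^{(u)}\le C_n^{(u)}\le R_n^{(u)}$ pathwise. Consequently $\tilde{\tau}_C^{(u)}\ge\tau_R^{(u)}:=\inf\{n:\log R_n^{(u)}\ge\log B\}$ for each $u$, so $\tilde{\tau}_C\ge\tau_R=\min_u\tau_R^{(u)}$ and $P_\pi^{(\boldsymbol{\theta}_0,\boldsymbol{\lambda})}(\tilde{\tau}_C<t)\le\sum_u P_\pi^{(\boldsymbol{\theta}_0,\boldsymbol{\lambda})}(\tau_R^{(u)}<t)$. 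On $\{\tau_R^{(u)}<t\}$ only pre‑change observations have been seen, so as in \eqref{eq:avg_worst} this probability is the same under $P_\pi^{(\boldsymbol{\theta}_0,\boldsymbol{\lambda})}$ and $P_\pi^{(\boldsymbol{\theta}_0,\boldsymbol{\lambda}_u)}$; applying the multi‑source analogue of the identity $\log\frac{\pi_n}{1-\pi_n}=\log\rho+\log R_n$ and the optional‑stopping argument of Lemma~\ref{lem:Pfa} gives $P_\pi^{(\boldsymbol{\theta}_0,\boldsymbol{\lambda}_u)}(\tau_R^{(u)}<t)\le(1+\rho B)^{-1}\le(\rho B)^{-1}$. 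Summing over the $\prod_l I_l$ charts and inserting $B=(\rho\alpha)^{-1}\prod_l I_l$ yields $\sup_{\boldsymbol{\lambda}\in\mathbf{\Lambda}}P_\pi^{(\boldsymbol{\theta}_0,\boldsymbol{\lambda})}(\tilde{\tau}_C<t)\le\alpha$.

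\emph{Delay.} Fix the true parameter $\boldsymbol{\lambda}_u$, write $d_u=d(\boldsymbol{\theta}_0,\boldsymbol{\lambda}_u)$, and note $|\log B|=|\log\alpha|+\log(\rho^{-1}\prod_l I_l)=|\log\alpha|(1+o(1))$. Since $\tilde{\tau}_C\le\tilde{\tau}_C^{(u)}$ it suffices to bound $\sum_{k\ge1}P(t=k)\,\mathbb{E}_k^{(\boldsymbol{\theta}_0,\boldsymbol{\lambda}_u)}[(\tilde{\tau}_C^{(u)}-k)^+]$. On $\{t=k\}$ set $Z_q=\log\frac{1}{1-\rho}\frac{f_{\boldsymbol{\lambda}_u}(\mathbf{X}_q)}{g_{\boldsymbol{\theta}_0}(\mathbf{X}_q)}=\sum_{l=1}^{L}\log\frac{f_{\lambda_{l,i_l}}(X_{l,q})}{g_{\theta_{l,0}}(X_{l,q})}+|\log(1-\rho)|$ and $S_j=\sum_{q=k}^{k+j-1}Z_q$; by source independence the $Z_q$, $q\ge k$, are i.i.d.\ under $P_k^{(\boldsymbol{\theta}_0,\boldsymbol{\lambda}_u)}$ with mean $d_u$, and $\frac1j S_j\to d_u$ a.s. The key inequality is that for $k\le n\le k+m_\alpha$ the index $k$ lies in the current window, so $\log\tilde{C}_n^{(u)}\ge S_{n-k+1}$. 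Fixing small $\varepsilon>0$ and putting $N^*=\lceil|\log B|/(d_u-\varepsilon)\rceil$, the hypothesis $\liminf m_\alpha/|\log\alpha|>\max_{\boldsymbol{\lambda}_v\in\mathbf{\Lambda}}1/d(\boldsymbol{\theta}_0,\boldsymbol{\lambda}_v)$ gives $N^*\le m_\alpha$ for all small $\alpha$, so on $\{S_j\ge|\log B|\text{ for some }j\le N^*\}$ one has $\tilde{\tau}_C^{(u)}\le k+N^*-1$, contributing at most $N^*=|\log B|/d_u\,(1+o(1))$ after letting $\varepsilon\downarrow0$. On the complement, for $N^*\le n\le m_\alpha$ one bounds $P_k(\tilde{\tau}_C^{(u)}>k+n)\le P_k(S_{n+1}<|\log B|)\le P_k(\tfrac1{n+1}S_{n+1}<d_u-\varepsilon)$, whose summed mass is $o(|\log\alpha|)$ by the convergence‑rate assumptions (the multi‑source versions of \eqref{eq:quickly_converge}--\eqref{eq:averege_quickly}); and for $n>m_\alpha$ the window lies entirely in the post‑change regime, so packing $\lfloor(n+1)/(m_\alpha+1)\rfloor$ disjoint length‑$(m_\alpha+1)$ windows into $[k,k+n]$ gives $P_k(\tilde{\tau}_C^{(u)}>k+n)\le p_\alpha^{\lfloor(n+1)/(m_\alpha+1)\rfloor}$ with $p_\alpha=P(\tfrac1{m_\alpha+1}\sum_{i=1}^{m_\alpha+1}Z_i<d_u-\delta)\downarrow0$, so this tail contributes at most $(m_\alpha+1)p_\alpha/(1-p_\alpha)$. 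Summing over $k$ against $P(t=k)$, using $\log m_\alpha=o(|\log\alpha|)$ to absorb the $(m_\alpha+1)$‑fold factors, and letting $\alpha\to0$ then $\varepsilon\downarrow0$ gives $\mathbb{E}_\pi^{(\boldsymbol{\theta}_0,\boldsymbol{\lambda}_u)}[(\tilde{\tau}_C-t)^+]\le|\log B|/d_u\,(1+o(1))$.

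The delicate point, and the only genuine obstacle, is exactly the finiteness of the window: the inequality $\log\tilde{C}_n^{(u)}\ge S_{n-k+1}$ fails once $n>k+m_\alpha$, so one cannot simply invoke the one‑sided SPRT/renewal identity $\mathbb{E}_k[\inf\{j:S_j\ge|\log B|\}]=|\log B|/d_u\,(1+o(1))$. The resolution is the two‑sided sizing of $m_\alpha$ in the hypotheses: the lower bound keeps the typical crossing time $|\log B|/d_u$ strictly inside the window, while $\log m_\alpha=o(|\log\alpha|)$ forces the rare events on which the change‑point leaves the window before the statistic has crossed to contribute only a lower‑order term to the expected delay. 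Making ``rare enough'' quantitative — via the i.i.d.\ post‑change structure and the inherited convergence‑rate assumptions — is the one estimate that needs care, and is also the only place where the explicit form of $B$ together with both scaling conditions on $m_\alpha$ are used.
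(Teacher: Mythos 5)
Your false-alarm half is exactly the paper's argument: the pathwise chain $\tilde{C}_n^{(u)}\le C_n^{(u)}\le R_n^{(u)}$, the reduction to $\tau_R$, the posterior-odds identity from Lemma~\ref{lem:Pfa}, and the union bound over the $\prod_l I_l$ charts, so that part stands as is.

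The delay half, however, has a genuine gap relative to the theorem as stated. Your middle regime ($N^*\le n\le m_\alpha$) is controlled by summing the termwise probabilities $P_k\bigl(\tfrac{1}{n+1}S_{n+1}<d_u-\varepsilon\bigr)$, and your far tail is bounded by $(m_\alpha+1)p_\alpha/(1-p_\alpha)$ with $p_\alpha=P\bigl(\tfrac{1}{m_\alpha+1}\sum_{i=1}^{m_\alpha+1}Z_i<d_u-\delta\bigr)$. Neither of these is $o(|\log\alpha|)$ on the strength of the law of large numbers alone: the middle sum has $m_\alpha-N^*\to\infty$ terms each only known to tend to $0$, and since the hypothesis $\log m_\alpha=o(|\log\alpha|)$ allows $m_\alpha$ to be polynomially (or faster) larger than $|\log\alpha|$, the claim that this condition ``absorbs the $(m_\alpha+1)$-fold factor'' is false without a rate on $p_\alpha$ (e.g.\ $m_\alpha=|\log\alpha|^2$ would require $p_\alpha=o(1/|\log\alpha|)$). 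You patch this by invoking ``multi-source versions of \eqref{eq:quickly_converge}--\eqref{eq:averege_quickly}'', but those $1$-quick-convergence conditions are hypotheses of the single-source Theorem~\ref{thm:general_opt}, not of Theorem~\ref{thm:win_tau_c}; so as written you prove the delay bound only under assumptions the theorem does not make.

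The paper avoids any rate assumption by the device of \cite{Lai:TIT:98}: fix $\delta\in(0,1)$, set $m_c=\lceil(1-\delta)^{-1}d(\boldsymbol{\theta}_0,\boldsymbol{\lambda}_u)^{-1}\log B\rceil$, and use the liminf condition to ensure $m_c\le m_\alpha$ for small $\alpha$; then, looking at the statistic only at the times $n=jm_c+k$, $j=1,\dots,r$, and lower-bounding the windowed maximum by the sum over the most recent block of length $m_c$ (legal because $m_c\le m_\alpha$), the survival probability $P_k(\tilde{\tau}_C^{(u)}-k>rm_c\mid\tilde{\tau}_C^{(u)}\ge k)$ factors over disjoint i.i.d.\ blocks and is at most $\delta^r$, where $\delta$ comes from mere convergence in probability as in \eqref{eq:covergent}. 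Summing the geometric series gives $\mathbb{E}_k[\tilde{\tau}_C^{(u)}-k\mid\tilde{\tau}_C^{(u)}\ge k]\le m_c/(1-\delta)$ uniformly in $k$, and letting $\delta\downarrow0$ yields \eqref{eq:L_delay} with no moment or quick-convergence hypotheses. If you replace your windows of length $m_\alpha+1$ (ratio $p_\alpha$, which you cannot control) by blocks of length $m_c$ (ratio $\delta$, fixed in advance) throughout, your argument becomes the paper's and the extra assumptions disappear.
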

\begin{proof}
Please see Section \ref{sec:proof_IV}.
\end{proof}
\begin{rmk}
Similar to the result presented in Theorem \ref{lem:LB}, the lower bound of ADD when the observer monitors $L$ sources simultaneously is given as
\begin{eqnarray}
\inf_{\tau \in \mathcal{T}} \mathbb{E}_{\pi}^{(\boldsymbol{\theta}_{0}, \boldsymbol{\lambda}_{u})}[(\tau - t)^{+}] \geq \frac{|\log B|}{d(\boldsymbol{\theta}_{0}, \boldsymbol{\lambda}_{u})}(1+o(1)); \label{eq:multi_LB}
\end{eqnarray}
In \eqref{eq:L_delay}, we note $|\log B| = |\log \alpha| + |\log \rho| + \sum_{l=1}^{L}\log I_{l}$; hence when $\rho$, $L$ and $I_{l}$'s are fixed, as $\alpha \rightarrow 0$, \eqref{eq:L_delay} indicates that $\tilde{\tau}_{C}$ is first-order asymptotically optimal.
\end{rmk}


Concise expressions \eqref{eq:win_c} to \eqref{eq:win_tau_c} bring mathematical convenience in showing the asymptotic performance of $\tilde{\tau}_{C}$. In the following, we rewrite the proposed algorithm into another equivalent form for the implementation purpose. We note that \eqref{eq:win_tau_c} can be equivalently written as
\begin{eqnarray}
\tilde{\tau}_{C} = \inf\left\{ n\geq1: \max_{\boldsymbol{\lambda}_{u} \in \mathbf{\Lambda}}(\tilde{C}_{n}^{(u)} - B_{u}) \geq 0 \right\}.
\end{eqnarray}
As indicated in Theorem \ref{thm:win_tau_c}, $B_{u}$ can be chosen as a constant over $u$ to guarantee the asymptotic optimality. In this case, we have
\begin{eqnarray}
\max_{\boldsymbol{\lambda}_{u} \in \mathbf{\Lambda}} \tilde{C}_{n}^{(u)} &=& \max_{\boldsymbol{\lambda}_{u} \in \mathbf{\Lambda}} \max_{n-m_{\alpha}\leq k\leq n} \prod_{q=k}^{n} \frac{1}{1-\rho} \frac{f_{\boldsymbol{\lambda}_{u}}(X_{1, q}, X_{2, q},\ldots, X_{L, q})}{g_{\boldsymbol{\theta}_{0}}(X_{1, q}, X_{2, q},\ldots, X_{L, q})} \no\\
&=& \max_{n-m_{\alpha}\leq k\leq n} \left(\frac{1}{1-\rho}\right)^{n-k+1} \max_{\boldsymbol{\lambda}_{u} \in \mathbf{\Lambda}} \prod_{q=k}^{n} \frac{f_{\boldsymbol{\lambda}_{u}}(X_{1, q}, X_{2, q},\ldots, X_{L, q})}{g_{\boldsymbol{\theta}_{0}}(X_{1, q}, X_{2, q},\ldots, X_{L, q})} \no\\
&=& \max_{n-m_{\alpha}\leq k\leq n} \left(\frac{1}{1-\rho}\right)^{n-k+1} \max_{(\lambda_{1, i_{1}}, \lambda_{2, i_{2}}, \ldots, \lambda_{L, i_{L}}) \in \mathbf{\Lambda}} \prod_{q=k}^{n} \prod_{l=1}^{L} \frac{f_{\lambda_{l, i_{l}}}(X_{l, q})}{g_{\theta_{l, 0}}(X_{l, q})} \no\\
&=& \max_{n-m_{\alpha}\leq k\leq n} \left(\frac{1}{1-\rho}\right)^{n-k+1} \prod_{l=1}^{L} \left[ \max_{\lambda_{l, i_{l}} \in \Lambda_{l}} \prod_{q=k}^{n}  \frac{f_{\lambda_{l, i_{l}}}(X_{l, q})}{g_{\theta_{l, 0}}(X_{l, q})} \right]. \label{eq:win_C_imp}
\end{eqnarray}
By taking logarithm on both sides, we have
\begin{eqnarray}
\max_{\boldsymbol{\lambda}_{u} \in \mathbf{\Lambda}} \log \tilde{C}_{n}^{(u)} = \max_{n-m_{\alpha}\leq k\leq n} \left\{ (n-k+1)|\log (1-\rho)| + \sum_{l=1}^{L} \left[ \max_{\lambda_{l, i_{l}} \in \Lambda_{l}} \sum_{q=k}^{n} \log \frac{f_{\lambda_{l, i_{l}}}(X_{l, q})}{g_{\theta_{l, 0}}(X_{l, q})} \right] \right\}. \label{eq:win_log_C_imp}
\end{eqnarray}
\eqref{eq:win_C_imp} or \eqref{eq:win_log_C_imp} indicates that the problem of searching best $\boldsymbol{\lambda}_{u} \in \mathbf{\Lambda}$ from $\prod_{l=1}^{L} I_{l}$ elements can be decomposed into $L$ sub-problems, each of which only needs to find the best $\lambda_{l, i_{l}} \in \Lambda_{l}$ from $I_{l}$ elements. Hence the computational complexity substantially decrease from $\prod_{l=1}^{L} I_{l}$ to $\sum_{l=1}^{L} I_{l}$. In the following, we discuss in detail about the implementation and the complexity of the above window-based modified M-SR algorithm.

For the proposed algorithm, the main computation lies on the term
\begin{eqnarray}
\max_{\lambda_{l, i_{l}} \in \Lambda_{l}} \sum_{q=k}^{n} \log \frac{f_{\lambda_{l, i_{l}}}(X_{l, q})}{g_{\theta_{l, 0}}(X_{l, q})}. \label{eq:computation_unit}
\end{eqnarray}
Note that we need to solve \eqref{eq:computation_unit} for $k = n-m_{\alpha}, \ldots, n$. These problems could be difficult to solve in general since $\Lambda_{l}$ is a discrete set and $f_{\lambda_{l, i_{l}}}$, $g_{\theta_{l, 0}}$ could lead to a rather complex log-likelihood formula. To solve \eqref{eq:computation_unit}, the observer can maintain a structure, which is termed as a \emph{unit} in this section, as illustrated in Figure \ref{fig:computation_unit}. In particular, the $l^{th}$ unit, which corresponds to the $l^{th}$ source, requires two storage spaces: the first space is named as the LLR table with $I_{l}$ rows and $m_{\alpha}+1$ columns. Specifically, the $i^{th}$ row, $i\in\{0, 1, \ldots, I_{I}-1\}$, and the $k^{th}$ column, $k\in\{n-m_{\alpha}, \ldots, n\}$, stores the value of $\sum_{q=k}^{n} \log \frac{f_{\lambda_{l, i_{l}}}(X_{l, q})}{g_{\theta_{l, 0}}(X_{l, q})}$. The second space with size $1 \times (m_{\alpha}+1)$ is named as maximum value container, the $k^{th}$ cell of which stores the maximum value the $k^{th}$ column of the LLR table; hence the maximum value container stores the optimal value of \eqref{eq:computation_unit} for $k = n-m_{\alpha}, \ldots, n$.

\begin{figure}[thb]
\centering
\includegraphics[width=0.55 \textwidth]{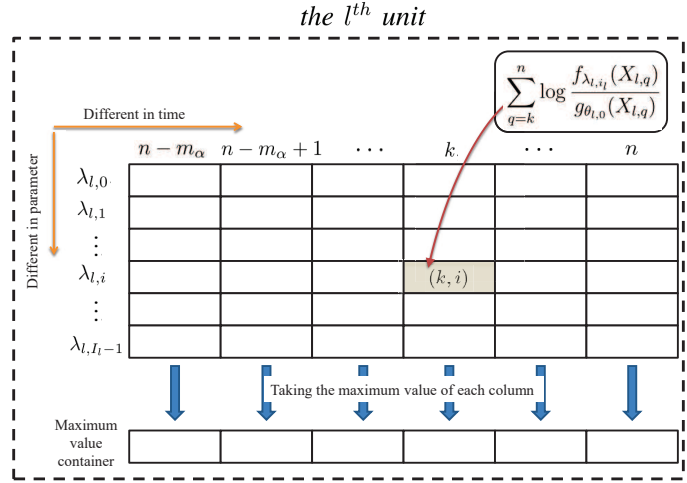}
\caption{The $l^{th}$ unit consists of two parts: 1) a table to store log likelihood ratios with size $I_{l}\times (m_{\alpha}+1)$, and 2) a container to store maximum values with size $1\times (m_{\alpha}+1)$. }
\label{fig:computation_unit}
\end{figure}

The LLR table is updated at each time slot whenever there comes a new observation. Assume the observer newly receives $\mathbf{X}_{n+1}$, then he updates the $l^{th}$ LLR table by following steps: 1) selecting component $X_{l, n+1}$ in $\mathbf{X}_{n+1}$ and computing the value of $\log \frac{f_{\lambda_{l, i}}(X_{l, n+1})}{g_{\theta_{l, 0}}(X_{l, n+1})}$ for all $i \in \{ 0, \ldots, I_{l}-1\}$; 2) erasing the values in the left-most column in the $l^{th}$ LLR table, shifting the rest of values one column to the left and then writing zeros in the right-most column; 3) adding $\log \frac{f_{\lambda_{l, i}}(X_{l, n+1})}{g_{\theta_{l, 0}}(X_{l, n+1})}$ to the $(i+1)^{th}$ row of the LLR table. This updating procedure simply uses the recursive relationship
\begin{eqnarray}
\sum_{q=k}^{n+1} \log \frac{f_{\lambda_{l, i_{l}}}(X_{l, q})}{g_{\theta_{l, 0}}(X_{l, q})} = \sum_{q=k}^{n} \log \frac{f_{\lambda_{l, i_{l}}}(X_{l, q})}{g_{\theta_{l, 0}}(X_{l, q})} + \log \frac{f_{\lambda_{l, i_{l}}}(X_{l, n+1})}{g_{\theta_{l, 0}}(X_{l, n+1})} \no
\end{eqnarray}
of the statistic in \eqref{eq:computation_unit}. Once the LLR table has been updated, the maximum value container needs to be updated correspondingly. Based on above discussions, it is easy to see that the computational complexity of updating the $l^{th}$ unit is on the order of $O(m_{\alpha}I_{l})$ since both updating the LLR table and updating the maximum value container require $O(m_{\alpha}I_{l})$ computational amount.

\begin{figure}[thb]
\centering
\includegraphics[width=0.75 \textwidth]{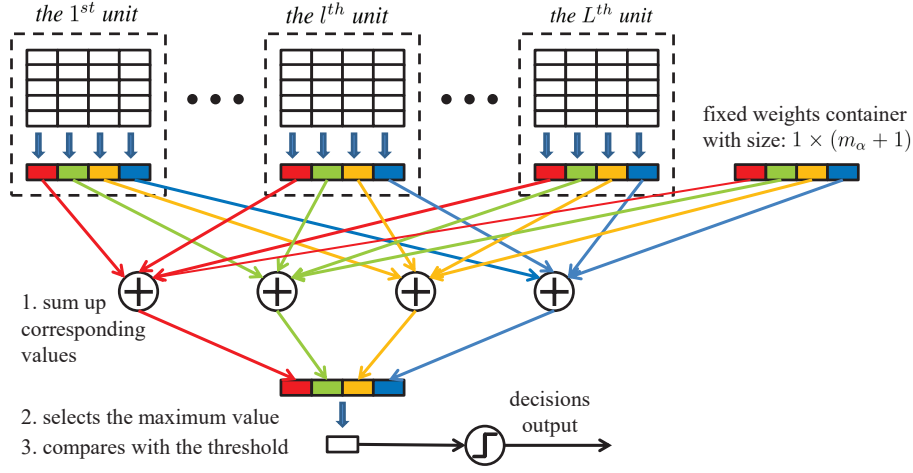}
\caption{The implementation of the window-based modified M-SR procedure}
\label{fig:MSR_structure}
\end{figure}

According to \eqref{eq:win_log_C_imp}, Figure \ref{fig:MSR_structure} illustrates a structure for the implementation of the proposed algorithm. Besides $L$ units for $L$ monitoring sources described above, we require another storage space with size $1 \times (m_{\alpha}+1)$ to store the value of $(n-k+1)|\log(1-\rho)|$ for $k\in\{n-m_{\alpha}, \ldots, n\}$. We term this space as the fixed weights container, which is illustrated in the upper-right corner in Figure \ref{fig:MSR_structure}. Noting that both the maximum value containers within the units and the fixed weights container have the same size, the detection statistic $\tilde{C}_{n}$ can be calculated by summing up the values in the cells with the same index $k$ and then selects the maximum of them. Finally, the decision rule can be made by comparing the value of $\tilde{C}_{n}$ with our pre-designed threshold.

Given the values in the maximum value containers within units, it is easy to see that the computational amount of calculating $\tilde{C}_{n}$ is on the order of $O(m_{\alpha}L)$. Hence, the main computational burden of implementing the whole algorithm is located in the updates of $L$ units, and the total computational amount is then on the order of $O(m_{\alpha} \sum_{l=1}^{L}I_{l})$. However, Figure \ref{fig:MSR_structure} indicates that the calculation procedure can be speed up by updating $L$ units in a parallel manner. This is because the update of the $l^{th}$ unit only requires $X_{l,n}$ in $\mathbf{X}_{n}$, which is a consequence of the mutual independence between each two sources. The storage requirement, as the structure illustrated in Figure \ref{fig:MSR_structure}, can be easily found to be $O(m_{\alpha}\sum_{l=1}^{L}I_{l})$.

\section{Proof} \label{sec:proof}
\subsection{Proofs in Section \ref{sec:optimal}} \label{subsec:optimal}
In this subsection, we prove the main conclusion, Theorem \ref{thm:general_opt}, in Section \ref{sec:optimal}. To proceed, we first present some supporting lemmas.

\begin{lem} \label{lem:Pfa}
By setting $B_{0} = \ldots = B_{I-1} = I(\rho\alpha)^{-1}$, the M-SR procedure $\tau_{R}$ defined in \eqref{eq:tau_r} satisfies the false alarm probability constraint, i.e.,
\begin{eqnarray}
\sup_{\lambda\in\Lambda}\mathrm{PFA}(\tau_{R}; \theta_{0}, \lambda) \leq \alpha.
\end{eqnarray}
\end{lem}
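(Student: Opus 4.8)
The plan is to reduce the statement to a single‑chart false‑alarm bound plus a union bound. Two facts already established in the paper do the work. First, by the argument around \eqref{eq:avg_worst}, the false‑alarm probability $P_\pi^{(\theta_0,\lambda)}(\tau<t)$ of \emph{any} $\{\mathcal F_n\}$‑stopping time $\tau$ does not depend on $\lambda$, since on $\{\tau<t\}$ all observations are drawn from $g_{\theta_0}$; in particular $P_\pi^{(\theta_0,\lambda)}(\tau_R^{(i)}<t)=P_\pi^{(\theta_0,\lambda_i)}(\tau_R^{(i)}<t)$ for every $\lambda$ and each $i$. Second, from \eqref{eq:tau_r} we have $\tau_R=\min_i\tau_R^{(i)}$, so $\{\tau_R<t\}=\bigcup_{i=0}^{I-1}\{\tau_R^{(i)}<t\}$. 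Hence, for an arbitrary $\lambda\in\Lambda$, $\mathrm{PFA}(\tau_R;\theta_0,\lambda)=P_\pi^{(\theta_0,\lambda)}\big(\bigcup_i\{\tau_R^{(i)}<t\}\big)$ and the union bound gives
\begin{eqnarray}
\mathrm{PFA}(\tau_R;\theta_0,\lambda)
&\le&\sum_{i=0}^{I-1}P_\pi^{(\theta_0,\lambda)}(\tau_R^{(i)}<t) \no\\
&=&\sum_{i=0}^{I-1}P_\pi^{(\theta_0,\lambda_i)}(\tau_R^{(i)}<t), \no
\end{eqnarray}
so it suffices to show $P_\pi^{(\theta_0,\lambda_i)}(\tau_R^{(i)}<t)\le\alpha/I$ for each $i$.

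For the per‑chart estimate I would invoke the posterior‑probability identity \eqref{eq:Lambda1}. On $\{\tau_R^{(i)}=n\}$ we have $R_n^{(\theta_0,\lambda_i)}>B_i$, so \eqref{eq:Lambda1} yields $\pi_n^{(\theta_0,\lambda_i)}/(1-\pi_n^{(\theta_0,\lambda_i)})>\rho B_i$, i.e. $1-\pi_n^{(\theta_0,\lambda_i)}<(1+\rho B_i)^{-1}$. Since $\tau_R^{(i)}$ is an $\{\mathcal F_n\}$‑stopping time and, by \eqref{eq:def_postp}, $P_\pi^{(\theta_0,\lambda_i)}(t>n\mid\mathcal F_n)=1-\pi_n^{(\theta_0,\lambda_i)}$, conditioning on $\mathcal F_n$ and using the tower property gives
\begin{eqnarray}
P_\pi^{(\theta_0,\lambda_i)}(\tau_R^{(i)}<t)
&=&\sum_{n\ge 1}\mathbb E_\pi^{(\theta_0,\lambda_i)}\!\left[\mathbf 1\{\tau_R^{(i)}=n\}\,\big(1-\pi_n^{(\theta_0,\lambda_i)}\big)\right] \no\\
&\le&\frac{P_\pi^{(\theta_0,\lambda_i)}(\tau_R^{(i)}<\infty)}{1+\rho B_i}\;\le\;\frac{1}{\rho B_i}. \no
\end{eqnarray}
With $B_i=I(\rho\alpha)^{-1}$ the right‑hand side is exactly $\alpha/I$; summing over $i=0,\dots,I-1$ gives $\mathrm{PFA}(\tau_R;\theta_0,\lambda)\le\alpha$ for every $\lambda\in\Lambda$, and taking the supremum over $\lambda$ finishes the proof. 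Equivalently, the per‑chart bound can simply be quoted from the classical Bayesian QCD analysis (e.g. Theorem~1 in \cite{Tartakovsky:TPIA:04}), since $\tau_R^{(i)}$ is an ordinary SR procedure for the fully specified post‑change density $f_{\lambda_i}$.

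I do not anticipate a genuine obstacle here: the argument is the textbook Shiryaev--Roberts false‑alarm computation combined with a union bound over the $I$ charts. The only two points that deserve care are (i) the measurability/conditioning step that converts $P_\pi^{(\theta_0,\lambda_i)}(\tau_R^{(i)}<t)$ into $\sum_n\mathbb E_\pi^{(\theta_0,\lambda_i)}[\mathbf 1\{\tau_R^{(i)}=n\}(1-\pi_n^{(\theta_0,\lambda_i)})]$ (which relies on $\tau_R^{(i)}$ being an $\{\mathcal F_n\}$‑stopping time and on \eqref{eq:def_postp}), and (ii) being explicit that the false‑alarm events carry a law independent of the post‑change parameter, so the union bound is legitimately applied under the single (arbitrary but fixed) true measure.
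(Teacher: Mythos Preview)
Your proof is correct and follows essentially the same route as the paper's: both use the posterior identity \eqref{eq:Lambda1} to get the per-chart bound $P_\pi^{(\theta_0,\lambda_i)}(\tau_R^{(i)}<t)=\mathbb E_\pi^{(\theta_0,\lambda_i)}\!\big[1-\pi_{\tau_R^{(i)}}^{(\theta_0,\lambda_i)}\big]\le\alpha/I$, invoke the fact that the false-alarm law is independent of $\lambda$, and then combine the $I$ charts. The only cosmetic difference is that the paper partitions $\{\tau_R<t\}$ according to which chart fires, $\{\tau_R<t,\;\tau_R=\tau_R^{(i)}\}$, whereas you use the (equivalent) union $\{\tau_R<t\}=\bigcup_i\{\tau_R^{(i)}<t\}$ directly; both lead to the same sum bound.
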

\begin{proof}
We note that $\Lambda$ is the compact set of all possible post-change parameter and $\Lambda_{D} \subseteq \Lambda$ is the finite set to implement $\tau_{R}$. 
From \eqref{eq:Lambda1}, we have
\begin{eqnarray}
\log R_{n}^{(\theta_{0}, \lambda_{i})} = \log \frac{\pi_{n}^{(\theta_{0}, \lambda_{i})}}{1-\pi_{n}^{(\theta_{0}, \lambda_{i})}} - \log \rho.
\end{eqnarray}
As a result setting $B_{i} = I(\rho\alpha)^{-1}$, $\tau_{R}^{(i)}$ defined in \eqref{eq:tau_ri} can be written as
\begin{eqnarray}
\tau_{R}^{(i)} &=& \inf\left\{ n \geq 1 | \log R_{n}^{(\theta_{0}, \lambda_{i})} > \log B_{i} \right\} \no\\
&=& \inf\left\{ n \geq 1 \Big| \pi_{n}^{(\theta_{0}, \lambda_{i})} > 1-\left(1+\frac{\alpha}{I}\right)^{-1} \frac{\alpha}{I} \right\}.
\end{eqnarray}
Hence
\begin{eqnarray}
P_{\pi}^{(\theta_{0}, \lambda_{i})}\left(\tau_{R} < t; \tau_{R} = \tau_{R}^{(i)}\right) \leq P_{\pi}^{(\theta_{0}, \lambda_{i})}\left( \tau_{R}^{(i)} < t \right) = \mathbb{E}_{\pi}^{(\theta_{0}, \lambda_{i})}\left[1-\pi_{\tau_{R}^{(i)}}^{(\theta_{0}, \lambda_{i})}\right] \leq \alpha/I. \no
\end{eqnarray}
We note that
\begin{eqnarray}
P_{\pi}^{(\theta_{0}, \lambda_{j})}\left(\tau_{R} < t; \tau_{R} = \tau_{R}^{(i)}\right) = P_{\pi}^{(\theta_{0}, \lambda_{i})}\left(\tau_{R} < t; \tau_{R} = \tau_{R}^{(i)}\right). \no
\end{eqnarray}
This is because all observations on the event $\{ \tau_{R} < t \}$ are generated from $g_{\theta_0}(x)$, which is not related to the post-change parameter. As a result, we have
\begin{eqnarray}
P_{\pi}^{(\theta_{0}, \lambda_{j})}(\tau_{R} < t) = \sum_{i=0}^{I-1} P_{\pi}^{(\theta_{0}, \lambda_{j})}\left(\tau_{R} < t; \tau_{R} = \tau_{R}^{(i)}\right) \leq \alpha. \label{eq:fa1}
\end{eqnarray}
Using above argument again, we conclude
\begin{eqnarray}
P_{\pi}^{(\theta_{0}, \lambda)}(\tau_{R} < t) = P_{\pi}^{(\theta_{0}, \lambda_{j})}(\tau_{R} < t) \leq \alpha
\end{eqnarray}
for all $\lambda \in \Lambda$. As a result, we have $\sup_{\lambda \in \Lambda}P_{\pi}^{(\theta_{0}, \lambda)}(\tau_{R} < t) \leq \alpha $.
\end{proof}

In the sequel, Lemma \ref{lem:Pfa} will be used in proving the false alarm probability of $\tau_{R}$ and $\tau_{C}$, and the following Lemma will be used in bounding the detection delay. To proceed, we define some notations. Let
\begin{eqnarray}
l(x_{q}; \theta_{0}, \lambda_{i}) := \log \frac{f_{\lambda_{i}}(x_{q})}{g_{\theta_{0}}(x_{q})}
\end{eqnarray}
be the log likelihood ratio, and define a new statistic as
\begin{eqnarray}
S_{m:n}^{(\theta_{0}, \lambda_{i})} := (n-m+1)|\log(1-\rho)| + \sum_{q=m}^{n} l(x_{q}; \theta_{0}, \lambda_{i}). \label{eq:stat_S}
\end{eqnarray}
Furthermore, let
\begin{eqnarray}
&&\tau_{S}^{(i)} := \inf\left\{ n \geq 1: S_{1:n}^{(\theta_{0}, \lambda_{i})} \geq \log B_{i} \right\}, \label{eq:tau_si} \\
&&\tau_{S} = \min_{ i \in \{0, \ldots, I-1\}} \tau_{S}^{(i)}. \label{eq:tau_s}
\end{eqnarray}
Then, we have the following lemma.
\begin{lem} \label{lem:Add}
With Assumptions \eqref{eq:quickly_converge} and \eqref{eq:averege_quickly}, as $B_{i} \rightarrow \infty$, we have
\begin{eqnarray}
&&\hspace{-20mm}\mathbb{E}_{\pi}^{(\theta_{0}, \lambda)}\left[\tau_{S}^{(i)} - t |\tau_{S}^{(i)}  \geq t \right] \no\\
&\leq& \frac{\log B_{i}}{D(f_{\lambda}, g_{\theta_0}) - D(f_{\lambda}, f_{\lambda_i}) + |\log(1-\rho)|}(1+o(1)).
\end{eqnarray}
\end{lem}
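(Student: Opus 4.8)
The plan is to bound the conditional detection delay $\mathbb{E}_{\pi}^{(\theta_{0},\lambda)}[\tau_{S}^{(i)}-t\mid \tau_{S}^{(i)}\geq t]$ by conditioning on the change-point $\{t=k\}$ and studying the overshoot behavior of the random walk-like statistic $S_{k:n}^{(\theta_{0},\lambda_{i})}$. First I would observe that on $\{t=k\}$, by the recursive/additive structure of $S$ in \eqref{eq:stat_S} and the fact that $S_{1:n}^{(\theta_0,\lambda_i)} \geq S_{k:n}^{(\theta_0,\lambda_i)}$ (since $S_{1:n} = S_{1:k-1} + S_{k:n}$ and the pre-change increments plus $|\log(1-\rho)|$ terms are being added — actually one must be slightly careful here, so I'd instead directly bound $\tau_S^{(i)} - t$ from above by the first time the post-change partial sum $S_{k:n}^{(\theta_0,\lambda_i)}$ alone crosses $\log B_i$, since adding $S_{1:k-1}$ cannot delay the crossing once we note $S_{1:n} \ge S_{k:n} - |S_{1:k-1}|$; cleaner is to use the standard trick of defining $\nu_k := \inf\{n \ge k : S_{k:n}^{(\theta_0,\lambda_i)} \ge \log B_i\}$ and showing $\tau_S^{(i)} \le \nu_k$ on $\{t=k, \tau_S^{(i)} \ge t\}$ up to a correction). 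Under $P_k^{(\theta_0,\lambda)}$, the increments $l(X_q;\theta_0,\lambda_i)+|\log(1-\rho)|$ for $q\geq k$ are i.i.d. with mean exactly $d(\theta_0,\lambda_i;\lambda) = D(f_\lambda,g_{\theta_0}) - D(f_\lambda,f_{\lambda_i}) + |\log(1-\rho)|$, which is the denominator appearing in the claimed bound.

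Next I would invoke a renewal-theoretic / nonlinear renewal estimate: for a random walk with positive drift $d := d(\theta_0,\lambda_i;\lambda)$, the expected first passage time over level $\log B_i$ is $\frac{\log B_i}{d}(1+o(1))$ as $B_i \to \infty$, and moreover the expectation is finite. The key quantitative input controlling the $o(1)$ term is precisely Assumption \eqref{eq:quickly_converge}: the random variable $T_{\lambda_i,\varepsilon}^{(k,\lambda)}$ is the last time the normalized partial sum deviates from $d$ by more than $\varepsilon$, and its finite expectation under $P_k^{(\theta_0,\lambda)}$ gives a uniform-in-$k$ handle on the fluctuations. Concretely, for any $\varepsilon>0$, once $n - k \geq T_{\lambda_i,\varepsilon}^{(k,\lambda)}$ we have $S_{k:n}^{(\theta_0,\lambda_i)} \geq (n-k+1)(d-\varepsilon)$, so the crossing time $\nu_k$ satisfies $\nu_k - k \leq \frac{\log B_i}{d-\varepsilon} + T_{\lambda_i,\varepsilon}^{(k,\lambda)} + O(1)$ pathwise; taking $\mathbb{E}_k^{(\theta_0,\lambda)}$ and using $\mathbb{E}_k^{(\theta_0,\lambda)}[T_{\lambda_i,\varepsilon}^{(k,\lambda)}] < \infty$ yields $\mathbb{E}_k^{(\theta_0,\lambda)}[\nu_k - k] \leq \frac{\log B_i}{d-\varepsilon}(1+o(1))$ for each fixed $k$.

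Then I would pass from the conditional-on-$\{t=k\}$ bound to the averaged statement. Writing $\mathbb{E}_\pi^{(\theta_0,\lambda)}[(\tau_S^{(i)}-t)^+] = \sum_{k=1}^\infty \mathbb{E}_k^{(\theta_0,\lambda)}[(\tau_S^{(i)}-k)^+]P(t=k)$ and using the pathwise bound $\tau_S^{(i)} - k \le \frac{\log B_i}{d-\varepsilon} + T_{\lambda_i,\varepsilon}^{(k,\lambda)} + O(1)$ on $\{\tau_S^{(i)} \geq k\}$, I would sum against the geometric prior and invoke Assumption \eqref{eq:averege_quickly}, namely $\sum_k \mathbb{E}_k^{(\theta_0,\lambda)}[T_{\lambda_i,\varepsilon}^{(k,\lambda)}]P(t=k) < \infty$, to get $\mathbb{E}_\pi^{(\theta_0,\lambda)}[(\tau_S^{(i)}-t)^+] \leq \frac{\log B_i}{d-\varepsilon}(1+o(1))$. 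Dividing by $P_\pi^{(\theta_0,\lambda)}(\tau_S^{(i)} \geq t)$ — which tends to $1$ as $B_i \to \infty$ since the statistic a.s. crosses any level eventually and PFA vanishes — converts this into the conditional form in the statement. Finally, letting $\varepsilon \downarrow 0$ gives the denominator $d = D(f_\lambda,g_{\theta_0}) - D(f_\lambda,f_{\lambda_i}) + |\log(1-\rho)|$ exactly.

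The main obstacle I anticipate is making the reduction "$\tau_S^{(i)}$ crossing time is controlled by the post-change partial-sum crossing time $\nu_k$" fully rigorous, including handling the contribution of the pre-change segment $S_{1:k-1}^{(\theta_0,\lambda_i)}$ (which has negative drift $-D(g_{\theta_0},f_{\lambda_i})+|\log(1-\rho)|$ and could in principle push the statistic down before $t$), and the overshoot at the crossing; this is exactly where the structure of $S_{1:n}$ as an additive statistic and a careful definition of the events $\{\tau_S^{(i)} \geq t\}$ matter. A secondary technical point is justifying the interchange of $\varepsilon \to 0$ with the $o(1)$-as-$B_i\to\infty$ limit, which requires noting the bound holds for every $\varepsilon$ with the $o(1)$ depending on $\varepsilon$, and then choosing $\varepsilon = \varepsilon(B_i) \to 0$ slowly; alternatively one cites the nonlinear renewal theorem of the type used in \cite{Tartakovsky:TPIA:04} directly, with \eqref{eq:quickly_converge} and \eqref{eq:averege_quickly} supplying the needed integrability hypotheses.
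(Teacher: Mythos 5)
Your overall route is the same as the paper's: condition on $\{t=k\}$, use the random variables $T_{\lambda_i,\varepsilon}^{(k,\lambda)}$ together with Assumptions \eqref{eq:quickly_converge} and \eqref{eq:averege_quickly} to obtain a first-passage bound of the form $\log B_i/(d(\theta_0,\lambda_i;\lambda)-\varepsilon)$ plus an integrable remainder, average against the geometric prior, divide by $P_{\pi}^{(\theta_0,\lambda)}(\tau_S^{(i)}\geq t)\rightarrow 1$, and finally let $\varepsilon\downarrow 0$. The one place where your plan is not correct as stated is precisely the step you flag as the main obstacle: the pre-change contribution is \emph{not} a pathwise (nor even a fixed-$k$ in expectation) $O(1)$ correction. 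The clean way out, which is what the paper does, is to rewrite $\tau_S^{(i)}$ on $\{t=k,\ \tau_S^{(i)}\geq k\}$ as the first time the post-change sum $S_{k:n}^{(\theta_0,\lambda_i)}$ exceeds the \emph{random} level $\log B_i - S_{1:k-1}^{(\theta_0,\lambda_i)}$, giving the pathwise bound $\tau_S^{(i)}-k+1 < \bigl(\log B_i - S_{1:k-1}^{(\theta_0,\lambda_i)}\bigr)/\bigl(d(\theta_0,\lambda_i;\lambda)-\varepsilon\bigr) + T_{\lambda_i,\varepsilon}^{(k,\lambda)}$. The extra term $-S_{1:k-1}^{(\theta_0,\lambda_i)}/(d-\varepsilon)$ has conditional mean of order $k$ under $P_k^{(\theta_0,\lambda)}$ (its expectation is $-(k-1)D(g_{\theta_0},f_{\lambda_i})+(k-1)|\log(1-\rho)|$), so it cannot be absorbed per path or per $k$; it becomes harmless only after averaging over the geometric prior, where $\mathbb{E}_{\pi}^{(\theta_0,\lambda)}\bigl[S_{1:t-1}^{(\theta_0,\lambda_i)}\bigr]$ is a finite constant independent of $B_i$ (bounded between $-D(g_{\theta_0},f_{\lambda_i})/(1-\rho)$-type and $|\log(1-\rho)|/(1-\rho)$-type constants), and since $P_{\pi}^{(\theta_0,\lambda)}(\tau_S^{(i)}\geq t)\rightarrow 1$ the conditioned version converges to it, so the whole correction sits inside the $(1+o(1))$ factor. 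With that bookkeeping inserted, your argument coincides with the paper's proof.

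Two smaller points. First, no overshoot or nonlinear-renewal input is needed for this one-sided, first-order bound: the crude inequality $S_{k:\tau_S^{(i)}-1}^{(\theta_0,\lambda_i)} < \log B_i - S_{1:k-1}^{(\theta_0,\lambda_i)}$ at the last pre-crossing time, combined with $S_{k:\tau_S^{(i)}-1}^{(\theta_0,\lambda_i)} > (\tau_S^{(i)}-k+1)(d-\varepsilon)$ on $\{\tau_S^{(i)} > T_{\lambda_i,\varepsilon}^{(k,\lambda)}+(k-1)\}$, already suffices. Second, the interchange of $\varepsilon\rightarrow 0$ with $B_i\rightarrow\infty$ needs no slowly-varying choice $\varepsilon(B_i)$: the bound $\log B_i/(d-\varepsilon)\,(1+o(1))$ holds for every fixed $\varepsilon>0$, and since the left-hand side does not depend on $\varepsilon$, taking the infimum over $\varepsilon$ directly yields the stated denominator $D(f_{\lambda},g_{\theta_0})-D(f_{\lambda},f_{\lambda_i})+|\log(1-\rho)|$.
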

\begin{proof}
On the event $\{ t = k\}$, we can decompose $S_{1:n}^{(\theta_{0}, \lambda_{i})}$ into two parts if $n \geq k$:
\begin{eqnarray}
S_{1:n}^{(\theta_{0}, \lambda_{i})} = S_{1:k-1}^{(\theta_{0}, \lambda_{i})} + S_{k:n}^{(\theta_{0}, \lambda_{i})}. \label{eq:twoS}
\end{eqnarray}
When $\lambda$ is the true post-change parameter, by the strong law of large numbers, we have
\begin{eqnarray}
\frac{1}{r}S_{k:k+r-1}^{(\theta_{0}, \lambda_{i})} \overset{a.s.}{\rightarrow} \mathbb{E}_{\lambda}\left[ \log \frac{f_{\lambda_{i}}(X)}{g_{\theta_{0}}(X)} + |\log(1-\rho)| \right] =: d(\theta_{0}, \lambda_{i}; \lambda), \label{eq:as_converge}
\end{eqnarray}
in which
$$d(\theta_{0}, \lambda_{i}; \lambda) = D(f_{\lambda}, g_{\theta_0}) - D(f_{\lambda}, f_{\lambda_{i}}) + |\log(1-\rho)|.$$
By \eqref{eq:twoS}, $\tau_{S}^{(i)} $ can be written equivalently as
\begin{eqnarray}
\tau_{S}^{(i)}  = \inf\left\{j>0: S_{k:j}^{(\theta_{0}, \lambda_{i})} \geq \log B_{i} - S_{1:k-1}^{(\theta_{0}, \lambda_{i})} \right\}. \no
\end{eqnarray}
Hence,
\begin{eqnarray}
S_{k:\tau_{S}^{(i)}-1}^{(\theta_{0}, \lambda_{i})} < \log B_{i} - S_{1:k-1}^{(\theta_{0}, \lambda_{i})}.
\end{eqnarray}
Define the random variable
\begin{eqnarray}
T_{\lambda_{i}, \varepsilon}^{(k, \lambda)} := \sup\left\{ n \geq 1: \big|n^{-1}S_{k:k+n-1}^{(\theta_{0}, \lambda_{i})} - d(\theta_{0}, \lambda_{i}; \lambda)\big| > \varepsilon \right\}.\no
\end{eqnarray}
By \eqref{eq:as_converge}, $T_{\lambda_{i}, \varepsilon}^{(k, \lambda)} < \infty$ almost surely under probability measure $P_{k}^{(\theta_{0}, \lambda)}$. In addition, $\mathbb{E}_{k}^{(\theta_{0}, \lambda)}\left[T_{\lambda_{i}, \varepsilon}^{(k, \lambda)}\right] < \infty$ and $\mathbb{E}_{\pi}^{(\theta_{0}, \lambda)}\left[T_{\lambda_{i}, \varepsilon}^{(\lambda)}\right] < \infty$ by Assumption \eqref{eq:quickly_converge} and \eqref{eq:averege_quickly}.

On the event $\left\{ \tau_{S}^{(i)}  > T_{\lambda_{i}, \varepsilon}^{(k, \lambda)} + (k-1) \right\}$, we have
\begin{eqnarray}
S_{k:\tau_{S}^{(i)}-1}^{(\theta_{0}, \lambda_{i})} > (\tau_{S}^{(i)}-k+1)(d(\theta_{0}, \lambda_{i}; \lambda)-\varepsilon), \no
\end{eqnarray}
hence
\begin{eqnarray}
\tau_{S}^{(i)} -k+1 < \frac{S_{k:\tau_{S}^{(i)}-1}^{(\theta_{0}, \lambda_{i})}}{d(\theta_{0}, \lambda_{i}; \lambda)-\varepsilon} < \frac{\log B_{i}-S_{1:k-1}^{(\theta_{0}, \lambda_{i})}}{d(\theta_{0}, \lambda_{i}; \lambda)-\varepsilon}.
\end{eqnarray}
Then we have
\begin{eqnarray}
\tau_{S}^{(i)}-k+1 &<& \frac{\log B_{i}-S_{1:k-1}^{(\theta_{0}, \lambda_{i})}}{d(\theta_{0}, \lambda_{i}; \lambda)-\varepsilon} \mathbf{1}_{\left\{ \tau_{S}^{(i)} > T_{\lambda_{i}, \varepsilon}^{(k, \lambda)} + (k-1) \right\}} + T_{\lambda_{i}, \varepsilon}^{(k, \lambda)} \mathbf{1}_{ \left\{\tau_{S}^{(i)} \leq T_{\lambda_{i}, \varepsilon}^{(k, \lambda)} + (k-1) \right\}} \no\\
&<& \frac{\log B_{i}-S_{1:k-1}^{(\theta_{0}, \lambda_{i})}}{d(\theta_{0}, \lambda_{i}; \lambda)-\varepsilon}+ T_{\lambda_{i}, \varepsilon}^{(k, \lambda)}. \no
\end{eqnarray}
Taking the conditional expectation on both sides, since $\mathbb{E}_{k}^{(\theta_{0}, \lambda)}\left[T_{\lambda_{i}, \varepsilon}^{(k, \lambda)}\right] < \infty$ and $P_{k}^{(\theta_{0}, \lambda)}\left(\tau_{S}^{(i)} \geq k \right) \rightarrow 1$ as $B_{i} \rightarrow \infty$, then we have
\begin{eqnarray}
&&\mathbb{E}_{k}^{(\theta_{0}, \lambda)}[\tau_{S}^{(i)}-k|\tau_{S}^{(i)}\geq k] \no\\
&&\leq \frac{\log B_{i}}{d(\theta_{0}, \lambda_{i}; \lambda)-\varepsilon} - \frac{\mathbb{E}_{k}^{(\theta_{0}, \lambda)}[S_{1:k-1}^{(\theta_{0}, \lambda_{i})}|\tau_{S}^{(i)} \geq k]}{d(\theta_{0}, \lambda_{i}; \lambda)-\varepsilon} + \mathbb{E}_{k}^{(\theta_{0}, \lambda)}[T_{\lambda_{i}, \varepsilon}^{(k, \lambda)}|\tau_{S}^{(i)} \geq k] \no \\
&&= \frac{\log B_{i}}{d(\theta_{0}, \lambda_{i}; \lambda)-\varepsilon}(1+o(1)) - \frac{\mathbb{E}_{k}^{(\theta_{0}, \lambda)}[S_{1:k-1}^{(\theta_{0}, \lambda_{i})}|\tau_{S}^{(i)}\geq k]}{d(\theta_{0}, \lambda_{i}; \lambda)-\varepsilon}. \no
\end{eqnarray}
Therefore,
\begin{eqnarray}
&& \mathbb{E}_{\pi}^{(\theta_{0}, \lambda)}[\tau_{S}^{(i)}-t|\tau_{S}^{(i)}\geq t] \no\\
&&= \frac{1}{P_{\pi}^{(\theta_{0}, \lambda)}(\tau_{S}^{(i)}\geq t)}\mathbb{E}_{\pi}^{(\theta_{0}, \lambda)}[\tau_{S}^{(i)}-t; \tau_{S}^{(i)}\geq t] \no\\
&&=\frac{1}{P_{\pi}^{(\theta_{0}, \lambda)}(\tau_{S}^{(i)}\geq t)} \sum_{k=1}^{\infty}P(t=k)\mathbb{E}_{k}^{(\theta_{0}, \lambda)}[\tau_{S}^{(i)} -k|\tau_{S}^{(i)} \geq k]P_{k}^{(\theta_{0}, \lambda)}(\tau_{S}^{(i)}\geq k) \no\\
&&\leq \frac{\log B_{i}}{d(\theta_{0}, \lambda_{i}; \lambda)-\varepsilon}(1+o(1)) - \frac{\mathbb{E}_{\pi}^{(\theta_{0}, \lambda)}\left[S_{1:t-1}^{(\theta_{0}, \lambda_{i})}|\tau_{S}^{(i)} \geq t\right]}{d(\theta_{0}, \lambda_{i}; \lambda)-\varepsilon}.  \label{eq:result}
\end{eqnarray}
We note that $\mathbb{E}_{k}^{(\theta_{0}, \lambda)}\left[S_{1:k-1}^{(\theta_{0}, \lambda_{i})}|\tau_{S}^{(i)} \geq k \right]$ and $\mathbb{E}_{\pi}^{(\theta_{0}, \lambda)}\left[S_{1:t-1}^{(\theta_{0}, \lambda_{i})}|\tau_{S}^{(i)}\geq t\right]$ are finite. To see this,
\begin{eqnarray}
\mathbb{E}_{k}^{(\theta_{0}, \lambda)}\left[S_{1:k-1}^{(\theta_{0}, \lambda_{i})}\right] &\overset{(a)}=& \mathbb{E}_{\infty}\left[S_{1:k-1}^{(\theta_{0}, \lambda_{i})}\right] \no\\
&=& \mathbb{E}_{\infty}\left[\sum_{q=1}^{k-1}l(X_{q}; \theta_{0}, \lambda_{i}) \right] + (k-1)|\log(1-\rho)| \no\\
&=& -(k-1)D(g_{\theta_0}, f_{\lambda}) + (k-1)|\log(1-\rho)|, \no
\end{eqnarray}
where (a) is true because $\{ X_{1}, \ldots, X_{k-1} \}$ are generated by $g_{\theta_{0}}$. Hence, we have
\begin{eqnarray}
-k D(g_{\theta_{0}}, f_{\lambda}) < \mathbb{E}_{k}^{(\theta_{0}, \lambda)}\left[S_{1:k-1}^{(\theta_{0}, \lambda_{i})}\right] < k|\log(1-\rho)|. \no
\end{eqnarray}
Since
\begin{eqnarray}
\mathbb{E}_{\pi}^{(\theta_{0}, \lambda)}[S_{1:t-1}^{(\theta_{0}, \lambda_{i})}] = \sum_{k=1}^{\infty} \mathbb{E}_{k}^{(\theta_{0}, \lambda)}\left[S_{1:k-1}^{(\theta_{0}, \lambda_{i})}\right] P(t=k), \no
\end{eqnarray}
we have
\begin{eqnarray}
-\frac{D(g_{\theta_{0}}, f_{\lambda})}{1-\rho} < \mathbb{E}_{\pi}^{(\theta_{0}, \lambda)}\left[S_{1:t-1}^{(\theta_{0}, \lambda_{i})}\right] < \frac{|\log(1-\rho)|}{1-\rho}. \no
\end{eqnarray}
Therefore, $\mathbb{E}_{\pi}^{(\theta_{0}, \lambda)}\left[S_{1:k-1}^{(\theta_{0}, \lambda_{i})}\right]$ is bounded. 
Since $ P_{\pi}^{(\theta_{0}, \lambda)}\left( \tau_{S}^{(i)} \geq t \right) \rightarrow 1$ as $B_{i} \rightarrow \infty$, then
$$\mathbb{E}_{\pi}^{(\theta_{0}, \lambda)}\left[S_{1:t-1}^{(\theta_{0}, \lambda_{i})}|\tau_{S}^{(i)}\geq t\right] \rightarrow \mathbb{E}_{\pi}^{(\theta_{0}, \lambda)}\left[S_{1:t-1}^{(\theta_{0}, \lambda_{i})}\right] \text{ as } B_{i} \rightarrow \infty.$$
By \eqref{eq:result} we obtain
\begin{eqnarray}
\mathbb{E}_{\pi}^{(\theta_{0}, \lambda)}\left[\tau_{S}^{(i)} -t \big|\tau_{S}^{(i)} \geq t \right] \leq \frac{\log B_{i}}{d(\theta_{0}, \lambda_{i}; \lambda)-\varepsilon}(1+o(1)).
\end{eqnarray}
Since the above equation holds for any $\varepsilon > 0$, then
\begin{eqnarray}
\mathbb{E}_{\pi}^{(\theta_{0}, \lambda)}\left[\tau_{S}^{(i)} -t \big|\tau_{S}^{(i)} \geq t \right] \leq \frac{\log B_{i}}{d(\theta_{0}, \lambda_{i}; \lambda)}(1+o(1)). \no
\end{eqnarray}
\end{proof}



\emph{Proof of Theorem \ref{thm:general_opt}}

Recall the definitions of $R_{n}^{(\theta_{0}, \lambda_{i})}$ in \eqref{eq:stat_R}, $C_{n}^{(\theta_{0}, \lambda_{i})}$ in \eqref{eq:stat_C} and $S_{1:n}^{(\theta_{0}, \lambda_{i})}$ in \eqref{eq:stat_S},
we have $\log R_{n}^{(\theta_{0}, \lambda_{i})} \geq \log C_{n}^{(\theta_{0}, \lambda_{i})} \geq S_{1:n}^{(\theta_{0}, \lambda_{i})}$. Hence for the same threshold $B_{i}$, we have $\tau_{R}^{(i)} \leq \tau_{C}^{(i)} \leq \tau_{S}^{(i)}$, which further indicates $\tau_{R} \leq \tau_{C} \leq \tau_{S}$. Therefore, it is sufficient for us to bound the false alarm probability of $\tau_{R}$ and to bound the average detection delay of $\tau_{S}$.

Specifically, we set threshold $B_{0} = \ldots = B_{I-1}=I(\rho\alpha)^{-1}$. Then, for the false alarm probability, we have
\begin{eqnarray}
\sup_{\lambda \in \Lambda}P_{\pi}^{(\theta_{0}, \lambda)}(\tau_{C} < t) \leq \sup_{\lambda \in \Lambda}P_{\pi}^{(\theta_{0}, \lambda)}(\tau_{R} < t) \leq \alpha,
\end{eqnarray}
in which the last inequality is because of Lemma \ref{lem:Pfa}. Therefore, both $\tau_{R}$ and $\tau_{C}$ satisfies the false alarm constraint. Furthermore, for ADD
\begin{eqnarray}
\mathbb{E}_{\pi}^{(\theta_{0}, \lambda)}[(\tau_{R} - t)^{+}] &\leq& \mathbb{E}_{\pi}^{(\theta_{0}, \lambda)}[(\tau_{C} - t)^{+}] \leq \mathbb{E}_{\pi}^{(\theta_{0}, \lambda)}[(\tau_{S} - t)^{+}] \leq \mathbb{E}_{\pi}^{(\theta_{0}, \lambda)}\left[(\tau_{S}^{(i)} - t)^{+}\right] \no\\
&=& P\left(\tau_{S}^{(i)} \geq t \right)\mathbb{E}_{\pi}^{(\theta_{0}, \lambda)}\left[\tau_{S}^{(i)} - t \big| \tau_{S}^{(i)} \geq t \right] \no\\
&\leq& \frac{|\log \alpha|}{D(f_{\lambda}, g_{\theta_0}) - D(f_{\lambda}, f_{\lambda_i}) + |\log(1-\rho)|}(1+o(1)), \label{eq:local1}
\end{eqnarray}
in which the last inequality is because $P\left(\tau_{S}^{(i)} \geq t \right) \rightarrow 1$ as $\alpha \rightarrow 0$ and the conclusion obtained in Lemma \ref{lem:Add}. Note that \eqref{eq:local1} holds for all $\lambda_{i} \in \Lambda_{D}$, then we can choose a smallest upper bound as
\begin{eqnarray}
\mathbb{E}_{\pi}^{(\theta_{0}, \lambda)}[(\tau_{R} - t)^{+}] &\leq& \mathbb{E}_{\pi}^{(\theta_{0}, \lambda)}[(\tau_{C} - t)^{+}] \no\\
&\leq& \frac{|\log \alpha|}{D(f_{\lambda}, g_{\theta_0}) - \max_{\lambda_{i}\in\Lambda_{D}}D(f_{\lambda}, f_{\lambda_i}) + |\log(1-\rho)|}(1+o(1)). \no
\end{eqnarray}
That ends the proof.

\subsection{Proofs in Section \ref{sec:ext}} \label{sec:proof_IV}
\emph{Proof of Theorem \ref{thm:win_tau_c}}

We first show that $\tilde{\tau}_{C}$ satisfies the false alarm constraint. Recall \eqref{eq:L_Rn}, \eqref{eq:L_Cn} and \eqref{eq:win_c}, we have $\tilde{C}^{(u)}_{n} \leq C^{(u)}_{n} \leq R^{(u)}_{n}$ by definition, therefore $\tilde{\tau}_{C} \geq \tau_{R}$ and $P_{\pi}^{(\boldsymbol{\theta}_{0}, \boldsymbol{\lambda})}(\tilde{\tau}_{C} < t) \leq P_{\pi}^{(\boldsymbol{\theta}_{0}, \boldsymbol{\lambda})}(\tau_{R} < t)$. Following the similar argument presented in Lemma \ref{lem:Pfa}, we can show that $\sup_{\boldsymbol{\lambda} \in \mathbf{\Lambda}} P_{\pi}^{(\boldsymbol{\theta}_{0}, \boldsymbol{\lambda})}(\tau_{R} < t) \leq \alpha$ by choosing $B_{u}=(\alpha\rho)^{-1}\prod_{l=1}^{L}I_{l}$. Hence the false alarm constraint is satisfied.

We then analyze the detection delay. Recall that
\begin{eqnarray}
d(\boldsymbol{\theta}_{0}, \boldsymbol{\lambda}_{u}) = D(f_{\boldsymbol{\lambda}_{u}}, g_{\boldsymbol{\theta}_{0}}) + |\log(1-\rho)| = \sum_{l=1}^{L} D(f_{\lambda_{l, i_{l}}}, g_{\theta_{l, 0}}) + |\log(1-\rho)|. \no
\end{eqnarray}
By the strong law of large number, when $\boldsymbol{\lambda}_{u}$ is the true post-change parameter, on the event $\{t=k\}$, we have
\begin{eqnarray}
\frac{1}{n} \sum_{q=k}^{k+n-1} \left[ \log \frac{f_{\boldsymbol{\lambda}_{u}}(X_{1, q}, \ldots, X_{L, q})}{g_{\boldsymbol{\theta}_{0}}(X_{1, q}, \ldots, X_{L, q})} + |\log(1-\rho)| \right] \rightarrow d(\boldsymbol{\theta}_{0}, \boldsymbol{\lambda}_{u}), \quad  P^{(\boldsymbol{\theta}_{0}, \boldsymbol{\lambda}_{u})}_{k}\text{-almost surely}.
\end{eqnarray}
As a result, we have $\forall \delta \in (0, 1)$, $\exists n(\delta)$ such that $n \geq n(\delta)$
\begin{eqnarray}
P_{k}^{(\boldsymbol{\theta}_{0}, \boldsymbol{\lambda}_{u})}\left( \frac{1}{n} \sum_{q=k}^{k+n-1} \left[ \log \frac{f_{\boldsymbol{\lambda}_{u}}(X_{1, q}, \ldots, X_{L, q})}{g_{\boldsymbol{\theta}_{0}}(X_{1, q}, \ldots, X_{L, q})} + |\log(1-\rho)| \right] \leq d(\boldsymbol{\theta}_{0}, \boldsymbol{\lambda}_{u}) - \delta \right) \leq \delta. \label{eq:covergent}
\end{eqnarray}
In the following, we set
$$m_{c} = \lceil (1-\delta)^{-1} d^{-1}(\boldsymbol{\theta}_{0}, \boldsymbol{\lambda}_{u})c \rceil, $$
in which $c := \log B_{u} = |\log \alpha| + |\log \rho| + \sum_{l=1}^{L}\log I_{l}$, and $\delta>0$ is a small constant such that
\begin{eqnarray}
\liminf_{\alpha \rightarrow 0} \frac{m_{\alpha}}{m_{c}} \geq 1  \label{eq:ma_mc}
\end{eqnarray}
for any $\boldsymbol{\lambda}_{u} \in \mathbf{\Lambda}$.
Recall that
\begin{eqnarray}
\tilde{\tau}_{C}^{(u)} = \inf\{ n\geq 1: \tilde{C}_{n}^{(u)} \geq e^{c} \}. \no
\end{eqnarray}
Since $\tilde{\tau}_{C} = \min_{\boldsymbol{\lambda}_{u} \in \mathbf{\Lambda}} \tilde{\tau}_{C}^{(u)}$, then it is sufficient to present an upper bound for the detection delay of $\tilde{\tau}_{C}^{(u)}$. To this end, we use a technique that is similar to the one adopted in \cite{Lai:TIT:98}. Let $\boldsymbol{\lambda}_{u}$ be the true post-change parameter. Then, for any given $\delta>0$ such that \eqref{eq:ma_mc} is satisfied, when $\alpha$ is small enough, we have 
\begin{eqnarray}
&&P_{k}^{(\boldsymbol{\theta}_{0}, \boldsymbol{\lambda}_{u})}\left(\tilde{\tau}_{C}^{(u)} - k > rm_{c} \big| \tilde{\tau}_{C}^{(u)} \geq k \right) =  P_{k}^{(\boldsymbol{\theta}_{0}, \boldsymbol{\lambda}_{u})}\left(\tilde{\tau}_{C}^{(u)} > k+rm_{c} \big| \tilde{\tau}_{C}^{(u)} \geq k \right)\no\\
&&= P_{k}^{(\boldsymbol{\theta}_{0}, \boldsymbol{\lambda}_{u})}\left( \log \tilde{C}_{n}^{(u)} < c \text{ for all } n\leq k+rm_{c} \Big| \tilde{\tau}_{C}^{(u)} \geq k \right) \no\\
&&= P_{k}^{(\boldsymbol{\theta}_{0}, \boldsymbol{\lambda}_{u})}\left( \max_{n-m_{\alpha} \leq s \leq n} \sum_{q=s}^{n}\left[ \log \frac{f_{\boldsymbol{\lambda}_{u}}(X_{1, q}, \ldots, X_{L, q})}{g_{\boldsymbol{\theta}_{0}}(X_{1, q}, \ldots, X_{L, q})} + |\log(1-\rho)| \right] < c \right. \no\\
&&\hspace{30mm} \text{ for all } n\leq k+rm_{c} \Big| \tilde{\tau}_{C}^{(u)} \geq k \Bigg) \no\\
&&\leq P_{k}^{(\boldsymbol{\theta}_{0}, \boldsymbol{\lambda}_{u})}\left( \max_{n-m_{\alpha} \leq s \leq n} \sum_{q=s}^{n}\left[ \log \frac{f_{\boldsymbol{\lambda}_{u}}(X_{1, q}, \ldots, X_{L, q})}{g_{\boldsymbol{\theta}_{0}}(X_{1, q}, \ldots, X_{L, q})} + |\log(1-\rho)| \right] < c \right. \no\\
&&\hspace{30mm} \text{ for all } n = jm_{c}+k, j=1, 2, \ldots, r \Big| \tilde{\tau}_{C}^{(u)} \geq k \Bigg) \no\\
&&\overset{(a)}{\leq} P_{k}^{(\boldsymbol{\theta}_{0}, \boldsymbol{\lambda}_{u})}\left( \sum_{q=n-m_{c}+1}^{n}\log \frac{f_{\boldsymbol{\lambda}_{u}}(X_{1, q}, \ldots, X_{L, q})}{g_{\boldsymbol{\theta}_{0}}(X_{1, q}, \ldots, X_{L, q})} + m_{c}|\log(1-\rho)| < c \right. \no\\
&&\hspace{30mm} \text{ for all } n = jm_{c}+k, j=1, 2, \ldots, r \Big| \tilde{\tau}_{C}^{(u)} \geq k \Bigg) \no\\
&&= P_{k}^{(\boldsymbol{\theta}_{0}, \boldsymbol{\lambda}_{u})}\left(  \sum_{q=(j-1)m_{c}+k+1}^{jm_{c}+k}\log \frac{f_{\boldsymbol{\lambda}_{u}}(X_{1, q}, \ldots, X_{L, q})}{g_{\boldsymbol{\theta}_{0}}(X_{1, q}, \ldots, X_{L, q})} + m_{c}|\log(1-\rho)| < c \right. \no\\
&&\hspace{30mm} \text{ for all } j=1, 2, \ldots, r \Big| \tilde{\tau}_{C}^{(u)} \geq k \Bigg) \no\\
&&\overset{(b)}{=} \left[ P_{k}^{(\boldsymbol{\theta}_{0}, \boldsymbol{\lambda}_{u})}\left(  \sum_{q=k+1}^{m_{c}+k}\log \frac{f_{\boldsymbol{\lambda}_{u}}(X_{1, q}, \ldots, X_{L, q})}{g_{\boldsymbol{\theta}_{0}}(X_{1, q}, \ldots, X_{L, q})} + m_{c}|\log(1-\rho)| < c \right)\right]^{r} \no\\
&&\overset{(c)}\leq \left[ P_{k}^{(\boldsymbol{\theta}_{0}, \boldsymbol{\lambda}_{u})}\left( m_{c}^{-1} \sum_{q=k+1}^{m_{c}+k}\log \frac{f_{\boldsymbol{\lambda}_{u}}(X_{1, q}, \ldots, X_{L, q})}{g_{\boldsymbol{\theta}_{0}}(X_{1, q}, \ldots, X_{L, q})} + |\log(1-\rho)| < (1-\delta)d(\boldsymbol{\theta}_{0}, \boldsymbol{\lambda}_{u}) \right) \right]^{r} \no\\
&&\overset{(d)}\leq \delta^{r}, \label{eq:delay_r}
\end{eqnarray}
in which (a) is because of \eqref{eq:ma_mc} for $\alpha$ small enough; (b) is because $\{(X_{1, q}, \ldots, X_{L, q}), q=(j-1)m_{c}+k+1, \ldots, jm_{c}+k\}$ are i.i.d. over $j$ and are independent of event $\left\{\tilde{\tau}_{C}^{(u)} \geq k\right\}$; (c) is because $m_{c}$ is the smallest integer $\geq (1-\delta)^{-1}d(\boldsymbol{\theta}_{0}, \boldsymbol{\lambda}_{u})^{-1}c$; (d) is an application of \eqref{eq:covergent}, noting that for a given $\delta > 0$, $m_{c} \sim |\log \alpha| > n(\delta)$ when $\alpha$ is small enough.
With \eqref{eq:delay_r}, we have
\begin{eqnarray}
\mathbb{E}_{k}^{(\boldsymbol{\theta}_{0}, \boldsymbol{\lambda}_{u})}\left[ m_{c}^{-1}(\tilde{\tau}_{C}^{(u)} - k) \Big| \tilde{\tau}_{C}^{(u)} \geq k \right] \leq \sum_{r=0}^{\infty} \delta^{r} = (1-\delta)^{-1}.
\end{eqnarray}
As a result, we have
\begin{eqnarray}
\mathbb{E}_{\pi}^{(\boldsymbol{\theta}_{0}, \boldsymbol{\lambda}_{u})}\left[(\tilde{\tau}_{C} - t)^{+}\right] &\leq& \mathbb{E}_{\pi}^{(\boldsymbol{\theta}_{0}, \boldsymbol{\lambda}_{u})}\left[(\tilde{\tau}_{C}^{(u)} - t)^{+}\right] \no\\
&=& \sum_{k=1}^{\infty}\rho(1-\rho)^{k-1} \mathbb{E}_{k}^{(\boldsymbol{\theta}_{0}, \boldsymbol{\lambda}_{u})}\left[ (\tilde{\tau}_{C}^{(u)} - t)^{+} \right] \no\\
&=& \sum_{k=1}^{\infty}\rho(1-\rho)^{k-1} P_{k}^{(\boldsymbol{\theta}_{0}, \boldsymbol{\lambda}_{u})}\left(\tilde{\tau}_{C}^{(u)} \geq t \right) \mathbb{E}_{k}^{(\boldsymbol{\theta}_{0}, \boldsymbol{\lambda}_{u})}\left[\tilde{\tau}_{C}^{(u)} - k \big| \tilde{\tau}_{C}^{(u)} \geq k \right] \no\\
&\leq& \frac{m_{c}}{1-\delta} = \frac{c}{(1-\delta)^{2}d(\boldsymbol{\theta}_{0}, \boldsymbol{\lambda}_{u})}(1+o(1)).
\end{eqnarray}
Then the conclusion is achieved since the above equation holds for any given small $\delta > 0$.
\section{Numerical Simulation} \label{sec:sim}
In this section, we provide some numerical examples to illustrate the theoretical results obtained in this paper. 


In the first simulation, we illustrate the performance of the M-SR procedure and the modified M-SR procedure for Formulation \eqref{eq:P1}. In this simulation, we set $\rho = 0.01$, and we assume that the pre-change distribution $g_{\theta_0}$ is $\mathcal{N}(0, 1)$ and the post-change distribution $f_{\lambda}$ is $\mathcal{N}(\lambda, 1)$, where $\lambda$ takes value in a close interval $\Lambda = [0.4, 2.8]$. We set the true post-change parameter is $\lambda = 1$, but this value is unknown to the observer. To implement the proposed algorithms, the observer needs to select a discrete set $\Lambda_{D}$. In the simulation, we select two different discrete sets, namely $\Lambda_{D}^{1} = \{0.4, 1.6, 2.8\}$ and $\Lambda_{D}^{2} = \{0.4, 1, 1.6, 2.2, 2.8\}$, and compare the performances of the proposed algorithms on these two sets. The simulation result is shown in Figure \ref{fig:P1_Sigma1}. The black dot-dash line is the theoretical asymptotic lower bound calculated by \eqref{eq:LB_P1}. The red dash line with stars and the red solid line with squares are the performance of the modified M-SR procedure and that of the M-SR procedure w.r.t. $\Lambda_{D}^{1}$, respectively. The blue dash line with diamonds and the blue solid line with circles are the performance of the M-SR procedure and that of the M-SR procedure w.r.t. $\Lambda_{D}^{2}$, respectively. For both cases, the detection delay of the modified M-SR procedure is larger than that of the M-SR procedure. This is because $R_{n}^{(\theta_{0}, \lambda_{i})}$ is larger than $C_{n}^{(\theta_{0}, \lambda_{i})}$ as indicated by \eqref{eq:stat_R} and \eqref{eq:stat_C}. However, the gap between the M-SR procedure and the modified M-SR procedure are relatively small w.r.t. the detection delay as $\alpha \rightarrow 0$. This verifies the result in Theorem \ref{thm:general_opt} that the M-SR procedure and the modified M-SR procedure have the same asymptotic performance. Furthermore, we note that the lines in blue are almost parallel to the asymptotic lower bound but the lines in red are not. This indicates that the algorithms based on $\Lambda_{D}^{2}$ are asymptotic optimal since the constant differences between the blue lines and the black dash line are negligible when the detection delay goes to infinity. For the same reason, the algorithms based on $\Lambda_{D}^{1}$ suffer performance loss. This phenomenon is also indicated by the result in Theorem \ref{thm:general_opt}. Since one of the candidate points in $\Lambda_{D}^{2}$ is the true post-change parameter, we have $\min_{\lambda_{i} \in \Lambda_{D}^{2}}D(f_{\lambda}, f_{\lambda_{i}})=0$, hence blue lines are asymptotically optimal.

\begin{figure}[thb]
\centering
\includegraphics[width=0.5 \textwidth]{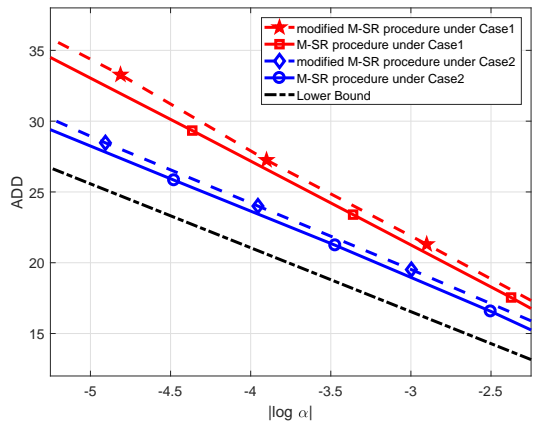}
\caption{The performance of M-SR procedure and modified M-SR procedure under different settings}
\label{fig:P1_Sigma1}
\end{figure}

In the second simulation, we examine the asymptotic optimality of the proposed window-based modified M-SR procedure for the multi-source monitoring problem considered in Section \ref{sec:ext}. In this simulation, we consider three mutually independent sources. For all of these three sources, the pre-change distribution is $\mathcal{N}(0, 1)$ and the post-change distribution is $\mathcal{N}(0, \lambda_{i}^2)$ for $i = 1, 2, 3$. In the simulation, we set $\mathbf{\Lambda} = \Lambda_{1} \times \Lambda_{2} \times \Lambda_{3}$ with $\Lambda_{i} = \{1.5, 1.6, 1.7, 2, 2.1, 2.2, 2.3\}$ for $i = 1, 2, 3$. However, the true post-change parameter for each source is different: particularly, $\lambda_{1}=1.7$, $\lambda_{2}=2$ and $\lambda_{3}=2.2$. In addition, we set $\rho = 0.01$ and the window length $m_{\alpha} = 200$. The performance of the proposed window-based modified M-SR procedure is presented in Figure \ref{fig:multisensor_case}. The black dot line is the theoretical asymptotic lower bound calculated by \eqref{eq:multi_LB} and the blue line with squares is the performances of the proposed algorithm. We can see that the blue line is parallel to the theoretical asymptotic lower bound, hence the proposed modified M-SR procedure is asymptotically optimal.
\begin{figure}[thb]
\centering
\includegraphics[width=0.52 \textwidth]{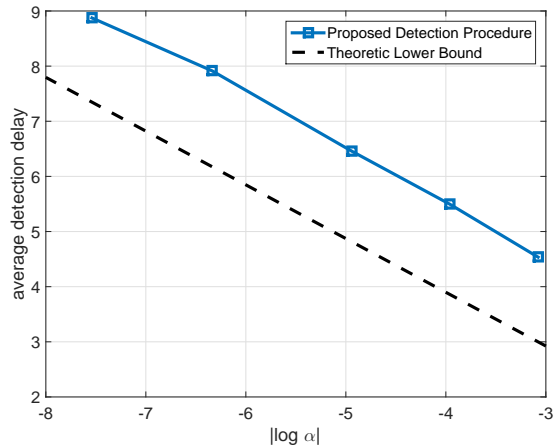}
\caption{The performance of window based modified M-SR procedure for three mutually independent sequences}
\label{fig:multisensor_case}
\end{figure}

%

\section{Conclusion} \label{sec:con}
In this paper, we have considered the Bayesian QCD problem with unknown post-change parameters. In this case, we have proposed two low complexity multi-chart detection procedures, namely the M-SR procedure and the modified M-SR procedure, and have shown these two multi-chart detection procedures are asymptotically optimal when $\Lambda$, the feasible set of the post-change parameter, is finite and asymptotically $\epsilon-$optimal when $\Lambda$ is an interval. We have also considered the multi-source monitoring problem with unknown post-change parameters. In this case, we have proposed a window-based modified M-SR detection procedure and have shown its asymptotic optimality. Both the computational complexity and storage requirement of the proposed algorithm have been shown to be on the order of $O(m_{\alpha}\sum_{l=1}^{L} I_{l})$.

\bibliographystyle{ieeetr}{}
\bibliography{macros,highdimension,detection,application}

\begin{thebibliography}{10}

\bibitem{Jun:ICASSP:16_2}
J.~Geng and L.~Lai, ``Bayesian quickest detection with unknown post-change
  parameter,'' in {\em Proc. IEEE Intl. Conf. on Acoustics, Speech, and Signal
  Processing}, (Shanghai, China), Mar. 2016.

\bibitem{Kim:TSP:10}
S.~J. Kim and G.~B. Giannakis, ``Sequential and cooperative sensing for
  multichannel cognitive radios,'' {\em IEEE Trans. Signal Processing},
  vol.~58, pp.~4239--4253, Aug. 2010.

\bibitem{Sayed:TSP:10}
Q.~Zou, S.~Zheng, and A.~H. Sayed, ``Cooperative sensing via sequential
  detection,'' {\em IEEE Trans. Signal Processing}, vol.~58, pp.~6266--6283,
  Dec. 2010.

\bibitem{Lai:GLOBE:08}
L.~Lai, Y.~Fan, and H.~V. Poor, ``Quickest detection in cognitive radio: A
  sequential change detection framework,'' in {\em Proc. IEEE Global
  Telecommunications Conf.}, (New Orleans, Louisiana, USA), Nov. 2008.

\bibitem{Shiryaev:Soviet:61}
A.~N. Shiryaev, ``The problem of the most rapid detection of a disturbance in a
  stationary process,'' {\em {S}oviet {M}ath. {D}okl.}, no.~2, pp.~795--799,
  1961.
\newblock (translation from Dokl. Akad. Nauk SSSR vol. 138, pp. 1039-1042,
  1961).

\bibitem{Shiryaev:TPIA:63}
A.~N. Shiryaev, ``On optimal methods in quickest detection problems,'' {\em
  Theory of Probability and Its Applications}, vol.~8, pp.~22--46, 1963.

\bibitem{Lorden:AmS:71}
G.~Lorden, ``Procedures for reacting to a change in distribution,'' {\em Annals
  of Mathematical Statistics}, vol.~42, no.~6, pp.~1897--1908, 1971.

\bibitem{Pollak:AnS:85}
M.~Pollak, ``Optimal detection of a change in distribution,'' {\em Annals of
  Statistics}, vol.~13, pp.~206--227, Mar. 1985.

\bibitem{Tartakovsky:TPIA:04}
A.~G. Tartakovsky and V.~V. Veeravalli, ``General asymptotic {B}ayesian theory
  of quickest change detection,'' {\em Theory of Probability and Its
  Applications}, vol.~49, no.~3, pp.~458--497, 2005.

\bibitem{Bayraktar:AAP:06}
E.~Bayraktar, S.~Dayanik, and I.~Karatzas, ``Adaptive poisson disorder
  problem,'' {\em Annals of Applied Probability}, vol.~16, pp.~1190--1261,
  2006.

\bibitem{Dayanik:MOR:08}
S.~Dayanik, C.~Goulding, and H.~V. Poor, ``Bayesian sequential change
  diagnosis,'' {\em Mathematics of Operations Research}, vol.~33, no.~2,
  pp.~475--496, 2008.

\bibitem{Dayanik:TIT:09}
S.~Dayanik and C.~Goulding, ``Detection and identification of an unobservable
  change in the distribution of a {M}arkov-modulated random sequence,'' {\em
  IEEE Trans. Inform. Theory}, vol.~55, pp.~3323--3345, July 2009.

\bibitem{Moustakides:AnS:86}
G.~V. Moustakides, ``Optimal stopping times for detecting changes in
  distribution,'' {\em Annals of Statistics}, vol.~14, no.~4, pp.~1379--1387,
  1986.

\bibitem{Lorden:AnS:73}
G.~Lorden, ``Open-ended tests for Koopman-Darmois families,'' {\em Annals of
  Statistics}, vol.~1, no.~4, pp.~633--643, 1973.

\bibitem{Lai:JRSS:95}
T.~L. Lai, ``Sequential changepoint detection in quality control and dynamical
  systems,'' {\em Journal of the Royal Statistical Society}, vol.~57,
  pp.~613--658, 1995.

\bibitem{Pollak:AnS:87}
M.~Pollak, ``Average run lengths of an optimal method of detecting a change in
  distribution,'' {\em Annals of Statistics}, vol.~15, pp.~749--779, 1987.

\bibitem{Lai:TIT:98}
T.~L. Lai, ``Information bounds and quickest detection of parameter changes in
  stochastic systems,'' {\em IEEE Trans. Inform. Theory}, vol.~44, no.~7,
  pp.~2917--2929, 1998.

\bibitem{Yang:Stochastics:17}
H.~Yang, O.~Hadjiliadis, and M.~Ludkovski, ``Detection and identification in
  the {W}iener disorder problem with post-change drift uncertainty,'' {\em
  Stochastics}, vol.~89, no.~3-4, pp.~654--685, 2017.

\bibitem{Banerjee:TIT:15}
T.~Banerjee and V.~V. Veeravalli, ``Data-efficient minimax quickest change
  detection with composite post-change distribution,'' {\em IEEE Trans. Inform.
  Theory}, vol.~61, no.~9, pp.~5172--5184, 2015.

\bibitem{Tartakovsky:Book:14}
A.~Tartakovsky, I.~Nikiforov, and M.~Basseville, {\em Sequential Analysis:
  Hypothesis Testing and Changepoint Detection}.
\newblock New York: CRC Press, 2014.

\bibitem{Dayanik:AOR:13}
S.~Dayanik, W.~B. Powell, and K.~Yamazaki, ``Asymptotically optimal {B}ayesian
  sequential change detection and identification rules,'' {\em Annals of
  Operations Research}, vol.~208, no.~1, pp.~337--370, 2013.

\bibitem{Bayraktar:SPA:09}
E.~Bayraktar and M.~Ludkovski, ``Sequential tracking of a hidden {M}arkov chain
  using point process observations,'' {\em Stochastic Processes and their
  Applications}, vol.~119, pp.~1792--1822, June 2009.

\bibitem{Bayraktar:SADA:09}
E.~Bayraktar and S.~Sezer, ``Online change detection for a {P}oisson process
  with a phase-type change-time prior distribution,'' {\em Sequential Analysis:
  Design Methods and Application}, vol.~28, pp.~218--250, 2009.

\bibitem{Bayraktar:AOR:10}
E.~Bayraktar and M.~Ludkovski, ``Inventory management with partially observed
  non-stationary demand,'' {\em Annals of Operations Research}, vol.~176,
  pp.~7--39, 2010.

\bibitem{Tartakovsky:ICIF:08}
A.~G. Tartakovsky and A.~S. Polunchenko, ``Quickest changepoint detection in
  distributed multisensor systems under unknown parameters,'' in {\em Proc.
  Intl. Conf. on Information Fusion}, (Cologne, Germany), pp.~878--885, Jul.
  2008.

\bibitem{Mei:Biometrika:10}
Y.~Mei, ``Efficient scalable schemes for monitoring a large number of data
  streams,'' {\em Biometrika}, vol.~97, no.~2, pp.~419--433, 2010.

\bibitem{Nikiforov:SP:01}
I.~V. Nikiforov, ``A simple change detection scheme,'' {\em Signal Processing},
  vol.~81, pp.~149--172, 2001.

\bibitem{Lai:AnP:76}
T.~L. Lai, ``On $r$-quick convergence and a conjecture of {S}trassen,'' {\em
  Annals of Probability}, vol.~4, pp.~612--627, 1976.

\bibitem{Tartakovsky:SISP:98}
A.~G. Tartakovsky, ``Asymptotic optimality of certain multihypothesis
  sequential tests: non-i.i.d. case,'' {\em Statistical Inference for
  Stochastic Processes}, vol.~1, no.~3, pp.~265--295, 1998.

\bibitem{Dragalin:TIT:99}
V.~P. Dragalin, A.~G. Tartakovsky, and V.~V. Veeravalli, ``Multihypothesis
  sequential probability ratio tests -- {P}art {I}: Asymptotic optimality,''
  {\em IEEE Trans. Inform. Theory}, vol.~45, pp.~2448--2461, Nov. 1999.

\end{thebibliography}

\end{document}